\renewenvironment{equation*}{\gather\textstyle}{\nonumber\endgather}
\DeclareMathOperator*{\plim}{plim}
\newtheorem{theorem}{Theorem}
\newtheorem{lemma}{Lemma}
\newtheorem{assn}{Assumption A}
\newcommand{\vc}{\mbox{\bf vec}}
\newcommand{\diag}{\mbox{\bf diag}}
\newcommand{\inp}{\stackrel{p}{\rightarrow}}
\newcommand{\ind}{\stackrel{d}{\rightarrow}}
\newcommand{\op}{o_p}
\newcommand{\Op}{O_p}
\newcommand{\vbeta}{\bm{\beta}}
\newcommand{\vX}{\bm{X}}
\newcommand{\vXbar}{\bm{\widetilde X}}
\newcommand{\veps}{\bm{\epsilon}}
\newcommand{\vu}{\bm{u}}
\newcommand{\vy}{\bm{y}}
\newcommand{\vx}{\bm{x}}
\newcommand{\vQ}{\bm{Q}}
\newcommand{\va}{\bm{a}}
\newcommand{\vW}{\bm{W}}
\newcommand{\vOmega}{\bm{\Omega}}
\newcommand{\vw}{\bm{w}}
\newcommand{\vz}{\bm{z}}
\newcommand{\tilX}{\widetilde{\bm{X}}}
\newcommand{\vlam}{\bm{\lambda}}
\newcommand{\vP}{\bm{P}}
\newcommand{\vM}{\bm{M}}
\newcommand{\vT}{\bm{T}}
\newcommand{\vI}{\bm I}
\newcommand{\vvc}{\bm c}
\newcommand{\vgam}{\bm{\gamma}}
\newcommand{\vV}{\bm V}
\newcommand{\black}[1]{{\color{black}#1}}
\begin{document}

\title{Change Point Estimation in Panel Data with Time-Varying Individual Effects\thanks{We are grateful to Maurice Bun, Pavel \v{C}\'{i}\v{z}ek, Dick van Dijk, Bas Werker, John Einmahl, Bertrand Melenberg, Frank Kleibergen, Tobias Klein, Paulo Paruolo, as well as to the participants at the Tilburg Seminar in 2016 and at the conferences NESG 2016, IAAE 2016, NASM 2016, ERMAS 2016 and EC2 2017 for useful comments. Furthermore, we would like to kindly thank Junhui Qian and Liangjun Su for making their AGFL code available to us, and Geng Niu and Arthur van Soest for sharing their data.}}
\author{Otilia Boldea\thanks{Tilburg University, Department of Econometrics and Operation Research and CentER, E-mail:o.boldea@uvt.nl.} , Bettina Drepper\thanks{Tilburg University, Department of Econometrics and Operation Research and CentER, E-mail:b.drepper@uvt.nl.} \ and Zhuojiong Gan\thanks{Corresponding author, School of Statistics, Southwestern University of Finance and Economics, Chengdu, China, E-mail: ganzj@swufe.edu.cn.}\,}
\date{\today}
\maketitle

\begin{abstract}

This paper proposes a method for estimating multiple change points in panel data models with unobserved individual effects via ordinary least-squares (OLS). Typically, in this setting, the OLS slope estimators are inconsistent due to the unobserved individual effects bias. As a consequence, existing methods remove the individual effects before change point estimation through data transformations such as first-differencing. We prove that under reasonable assumptions, the unobserved individual effects bias has no impact on the consistent estimation of change points. Our simulations show that since our method does not remove any variation in the dataset before change point estimation, it performs better in small samples compared to first-differencing methods. We focus on short panels because they are commonly used in practice, and allow for the unobserved individual effects to vary over time. Our method is illustrated via two applications: the environmental Kuznets curve and the U.S. house price expectations after the financial crisis.

\end{abstract}

\section{Introduction}

In many panel datasets important variables of interest are missing, either because they are not available or because they are inherently unobservable. In a regression model of CO$_2$ emissions on energy consumption, variables such as a country's natural resources, political developments or the influence of environmental groups on decision making are typically not observed. While some of these unobserved variables or \textit{unobserved individual effects}, such as initial natural resources, may be time-constant, many others, like political developments or the influence of environmental groups, typically vary over time. Unobserved individual effects are common in panels such as cross-country data, survey data, medical studies, and when not properly dealt with, they cause slope estimates to be inconsistent since they introduce an omitted variable bias. To ensure consistency, most panel data methods assume that the individual effects are time-constant and remove them before estimation, through time-demeaning or first-differencing the initial model. In this paper, we show that the individual effects need not be removed for estimating the number and location of multiple change points. Despite the asymptotic bias in the slope estimators introduced by the unobserved individual effects, our method estimates consistently the number and location of change points under reasonable assumptions.

Besides the long line of work on change points in time series models (see \textit{interalia} Cs\"{o}rg\"{o} and Horv\'{a}th 1997; Bai and Perron 1998; Qu and Perron 2007; Harchaoui and L\`{e}vy-Leduc 2010; Aue and Horv\'{a}th 2013; Chan, Yau, and Zhang 2014; Perron and Yamamoto 2015; Qian and Su 2015) there is a growing body of work on change points in panel data. In panel data models, coefficients may exhibit common changes across customers, firms or countries due to for example policy changes, financial crises, housing bubbles or technological breakthroughs. Besides economics, panel change point methods have also been proposed to study changes in financial networks, in event counts for telecommunication networks, in encryptions of human speech and sound, and in genomic profiles of patients - see \textit{interalia} Vert and Bleakley (2010); Cho and Fryzlewicz (2015); Bardwell et al (2018).

Regarding testing for change points, several test statistics have been proposed by Emerson and Kao (2001, 2002), de Wachter and Tzavalis (2012), Horv\'{a}th and Hu\v{s}kov\'{a} (2012),  among others. Regarding estimation of multiple change points, one part of the literature is concerned with heterogeneous panels where the slope parameters are allowed to change across individuals or other panel units (see, e.g. Bai 2010; Kim 2011; Horv\'{a}th and Hu\v{s}kov\'{a} 2012;  Chan, Horv\'{a}th and Hu\v{s}kov\'{a} 2013; Torgovitski 2015; Cho and Fryzlewicz 2015; Cho 2016; Baltagi, Feng and Kao 2016; Okuy and Wang 2018). The other part, including this paper, focuses on homogenous panels, where the slope parameters are constant across individuals. In this setting, Vert and Bleakley (2010) propose estimating the change points via a group least-angle approach;  Qian and Su (2016) use an adaptive fused group Lasso (AGFL) method on the first-differenced data; Li, Qian and Su (2016) propose a principal component modified version of the AGFL method for dealing with a particular form of unobserved effects called interactive fixed effects;  Baltagi, Kao and Liu (2017) employ an OLS method on the initial as well as the first-differenced data for stationary and nonstationary regressors; Bardwell et al (2018) use a minimum description length criterion.

The majority of theoretical developments on change point estimation, including the ones listed above, focus on long panels (where either only the number of periods $T$ goes to infinity, or both $T$ and the number of individuals $N$ tend to infinity). However, many panels such as surveys, cross-country data and local administrative data remain short, either because the data has been discontinued, just started, or because older data is unreliable. Therefore, many applications using panel data are forced to rely on less than twenty time periods - see \textit{interalia} B\"orsch-Supan et al (2013) on the well-known European health survey SHARE, Baier and Bergstrand (2007) on a cross-country study of the impact of free trade agreements, Blanco and Ruiz (2013) on survey data analysis of the impact of crime on institutions and democracy,  Kyle and Williams (2016) on a cross country analysis of health care and prescription drugs, Niu and van Soest (2014) on the American Life Panel (ALP) survey and Armona, Fuster and Zafar (2018) on another self-collected survey of house price expectations. Only a few papers develop methods for change point estimation in short panels ($N \rightarrow \infty$, $T$ fixed) while at the same time considering unobserved individual effects. Bai (2010) and Torgovitski (2015) do so, but treat the unobserved individual effects as individual means with potential change points, and estimate these change points in the mean without including other regressors in the model; Bai (2010) proposes an OLS method and Torgovitski (2015) focuses on non-parametric estimation. Qian and Su (2016) consider long and short panels, and estimate a panel regression model with fixed effects, where they rely on first-differenced data for consistent estimation of multiple change points.

Our main contribution is to provide an OLS method that consistently estimates the number of change points in short panel regression models \textit{without transforming the initial data}. In our setting, the unobserved individual specific effects can either be treated as parameters or as random variables, and they may vary over time. Whether they are parameters or random variables, we omit them in the OLS estimation, and therefore \textit{a change point is defined as a change in the slope parameters, in the asymptotic bias of the OLS slope estimators, or in both.} After all these (pseudo) change points are identified, we consistently estimate the slope estimators via OLS estimation of the demeaned model in each corresponding stable sub-sample. This latter step allows us to test whether each of the identified change points can be attributed to changes in the slope parameters, often the quantities of interest to applied researchers.\footnote{The only changes that cannot be labeled as changes in slope parameters are those that occur exactly one period after another change, because in this case, with no further assumptions, any transformation would just remove the period in-between two adjacent changes and therefore any information about the corresponding slope parameters would be lost. This case is further discussed in Section 2.}

The literature often assumes that the unobserved individual effects are of a particular functional form, such as time-invariant individual effects (fixed effects), additive effects (fixed effects plus cross-section invariant time effects) or interactive fixed effects (fixed effects times a common shock that is cross-section invariant but changes over time, see e.g. Pesaran 2006; Bai 2009; Bai and Li 2014; Moon and Weidner 2015). Since in many applications, these assumptions can be perceived as too restrictive, we adopt a more general specification where individual specific effects can vary over time both in a smooth and abrupt way. 

The majority of papers (for long and short panels) that estimate changes in slope coefficients, like Qian and Su (2016), start from the premise that since OLS slope estimators are inconsistent due to the presence of unobserved individual effects, these effects need to be removed before change point estimation by means of some data transformation such as demeaning or first-differencing. In short panels, most of the variation is across individuals, and therefore such transformations, which typically remove a lot of cross-section variation, are problematic because they remove valuable information prior to change point estimation. Additionally, if the individual effects are not constant over the entire sample, first-differencing or any other available transformation does not fully remove them. In contrast to what most literature currently suggests, we prove that it is not necessary to transform the data for the purpose of change point estimation. Additionally, our simulation results show that in terms of correctly estimating the number of change points in small samples, our method performs better than the method in Qian and Su (2016), which relies on first-differencing.

Another contribution of this paper is to derive the asymptotic properties of two slope estimators while allowing for general time dependence and weak cross-section dependence in the level data. The first one is the conventional fixed effects estimator, obtained by OLS estimation of the initial model demeaned over each stable sub-sample, between two change points. For this estimator, we make the additional assumption that the unobserved individual effects change at the same time as the change in the slope parameters or the individual effects bias, which is still more general than assuming fixed, additive or interactive fixed effects. The second estimator is based on full-sample demeaning in the presence of fixed effects. We show that in the presence of fixed effects and change points, full-sample demeaning can lead to more efficient slope estimators as it uses the additional information that the individual effects do not change over time.

Related to this paper, for time-series models with regressors that are correlated with the errors, Perron and Yamamoto (2015) show under which conditions an OLS estimator for (pseudo) change points is consistent. They propose estimation of change points via sequential testing while our method consistently estimates the total number of (pseudo) change points in one step via an information criterion. Additionally, we show that if one imposes more change points than the truth (which may be desirable due to potential finite sample bias of post-selection methods), the set of estimated change points contains all the true ones with probability one in the limit.

The rest of the paper is organized as follows. Section 2 proves that our method consistently estimates the number and location of (pseudo) change points. Section 3 derives the asymptotic properties of the two proposed slope estimators. The finite-sample properties of the change point and slope estimators are studied through simulations in Section 4, and compared to the estimators in Qian and Su (2016). The practical use of our method is illustrated in Section 5 with two applications: the environmental Kuznets curve and the U.S. house price expectations in the aftermath of the financial crisis. In our first application, we show that in the implementation of the Kyoto protocol, major reductions in emission patterns occurred, which were unfortunately to a large extent undone after its implementation. In our second application, we show that determinants of house valuations changed from being largely subjective to being largely objective after the economy recovered from the recent financial crisis. Section 6 concludes. All the proofs are relegated to the Appendix.

\textbf{Notation: } Matrices and vectors are denoted with bold symbols, and scalars are not. Define for a scalar $S$, the generalized vec operator $\vc_{1:S}(\bm{A}_s) = (\bm A_1', \ldots, \bm A_S')'$, stacking in order the matrices $\bm A_s, (s=1,\ldots, S)$, which have the same number of columns. Let $\diag_{s=1:S}(\bm{A}_s) \equiv \diag_{1:S}(\bm{A}_s) = \diag(\bm A_1, \ldots, \bm A_{S})$ be the matrix that puts the submatrices $\bm A_1, \ldots, \bm A_S$ on the diagonal. If $S$ is the number of change points,  $T_1,\ldots, T_S$ are the ordered candidate change points and $T$ the number of time series observations, let $\lambda_0=0$, $\lambda_{S+1}=1$, and let $\vlam_S=(\lambda_0,\vc_{1:S}(\lambda_s)',\lambda_{S+1})'$ be \textbf{a sample partition} of the time interval $[1,T]$ divided by $T$, such that $\lambda_0=0$, $\lambda_{S+1}=1$, and $\lambda_s = T_s/T$ for $s=0,\ldots S+1$, with $T_0=0$ and $T_{S+1}=T$. Define constant regimes as $ I_s = [T_{s-1}+1, T_s]$ for $s=1, \ldots, S+1$. Let $\bm X = \vc_{1:T}(\bm X_t)$ be the $NT \times p$ matrix that stacks  $\bm X_t=\vc_{i=1:N}(\bm x_{it}')$ in order. Call $\tilX = \diag(\vc_{1:T_1}(\bm X_t), \vc_{T_1+1:T_2}(\bm X_t), \ldots, \vc_{T_{S}+1:T_{S+1}}(\bm X_t))$ \textbf{the diagonal partition} of $\bm X$ at $\bm \lambda_S$, with $\vX_1, \ldots, \vX_T$ on the diagonal and the rest of the elements zero. A superscript of $0$ on any quantity refers to the true quantity. For any random vector or matrix  $\bm Z$, denote by $||Z||$ the Euclidean norm for vectors, or the square root of the maximum eigenvalue of $\bm Z'\bm Z$ for matrices. Also denote $||\bm Z||_{q}$ the $\mathcal L_q$ norm, i.e. $||\bm Z||_{q}= E(||\bm Z ||^q)^{1/q}$. For convenience, we denote by $0$ either a scalar, a vector or a matrix of zeros, and we only specify its dimension when it is unclear.

\section{Change point estimation} \label{sec:model}

Assume that the true model is piecewise-linear with $m^0$ change points:
\begin{align} \label{eq:model}
y_{it}=
\bm x_{it}'\bm \beta_j^0+c_{it}+\epsilon_{it}, \qquad   t \in I_j^0, \qquad  j=1, \ldots, m^0+1.
\end{align}
In \eqref{eq:model}, $i=1,\ldots,N$ are individuals, $t=1,\ldots,T$ are time periods, with $N$ large  and  $T$  fixed, $y_{it}$ are scalar continuous outcomes, $\bm x_{it}$ is a $p \times 1$ vector including the intercept and observed covariates, some of which may be constant over time;  $m^0$ is the true  unknown number of change points, with $1 \leq m^0 \leq T-1$.  Also, $T_j^0,(j=1, \ldots, m^0)$ are the true unknown change points belonging to the sample partition $\vT_{m^0}^0$. \black{The true number and location of change points are properly defined in Assumption A\ref{a1}, and in model (1) they should be interpreted as possible changes in the unknown $p \times 1$ slope parameters $\vbeta_j^0$.} Furthermore, $\epsilon_{it}$ are unobserved mean-zero idiosyncratic errors, uncorrelated with $\vx_{it}$, and $c_{it}$ are the \textit{time-varying individual specific effects}, which are either parameters or \textit{unobserved random variables} that are uncorrelated with the idiosyncratic effects $\epsilon_{it}$ but possibly correlated with the observed covariates $x_{it}$. For example, in our second application, the subjective house price expectations equation contains unobservables related to individual optimism which may be correlated with covariates such as a home owner's view of his/her economic situation. \black{For the purpose of change point estimation, the time-variation allowed in $c_{it}$ is quite general and further discussed after Assumption A\ref{a1}.}

Assume first that the number of change points $m^0$ is known. To describe the least-squares change point estimators $\hat{\mathbf T}_{m^0} = \hat{\vlam}_{m^0} T$, let $u_{it} = c_{it} +\epsilon_{it}$, $\vu= \vc_{t=1:T}(\vc_{i=1:N} (u_{it}))$, $\vbeta^0=\vc_{j=1:m^0+1}(\vbeta_j^0)$, $\vy=\vc_{t=1:T}(\vc_{i=1:N} (y_{it}))$, and $\vXbar^0$ the diagonal partition of $\vX$ at the true partition $\vlam_{m^{0}}^0$. Then \eqref{eq:model} becomes:\begin{equation}\label{true}
\vy = \vXbar^0 \vbeta^0 + \vu.
\end{equation}

We propose estimating \eqref{true} by minimizing the sum of squared residuals over all possible sample partitions $\vlam_{m^0}$, which is equivalent to regressing $\vy$ on $\vXbar$ \black{(where the latter was defined in the notation section above),}
\begin{equation}\label{eq.s_nt}
\min_{\vlam_{m^0}} \, S_{NT}(\hat \vbeta_{\vlam_{m^0}}, \vlam_{m^0}) = \min_{\vlam_{m^0}}\, \left(NT\right)^{-1} \left(\vy-\vXbar \hat \vbeta_{\vlam_{m^0}} \right)'\left(y-\vXbar \hat \vbeta_{\vlam_{m^0}}\right),
\end{equation}
and where $\hat \vbeta_{\vlam_{m^0}} = (\vXbar' \vXbar)^{-1} \vXbar' \vy$ is the OLS estimator using $\vlam_{m^0}$ as the candidate partition. The minimizer of the above problem is denoted $\hat {\vlam}_{m^0}$ or $\hat {\mathbf T}_{m^0} =\hat {\vlam}_{m^0} T$ , and we refer to $\hat {\mathbf T}_{m^0}$ as the OLS change point estimators. If the minimizer is not unique, we break the tie by picking the smallest change point estimators. The OLS estimator of $\vbeta^0$ at the estimated partition is  denoted by $ \hat{\vbeta} = \hat{\vbeta}_{\hat \vlam_{m^0}}= \vc_{1:m^0+1}(\hat{\vbeta}_{j,\hat \vlam_{m^0}}) $. 

In general, $m^0$ is unknown and needs to be estimated. We propose estimating the number of change points by minimizing the following information criterion over $m=0,\ldots, T-1$, similar to BIC and HQIC:
$$
IC(m) = \log S_{NT}(\hat \vbeta_{\hat \vlam_{m}}, \hat \vlam_{m}) +p^*_m\ell_{NT},
$$
where $\ell_{NT}>0$, $\ell_{NT}\rightarrow 0$, $N \ell_{NT}\rightarrow \infty$, and $p_m^*$ is the number of parameters for a model with $m$ change points. Both Nimomiya (2005) - for a mean-shift model - and Hall and Sakkas (2013) - for a general regression model - show that the penalty for one change point should be three, not one, therefore we recommend to use $p_m^* = 3m+(m+1)p$. In the simulation section we show that the HQIC penalty, $\ell_{NT} = \log[\log (NT)]/NT$, is preferred to the BIC penalty, $\ell_{NT}= \log (NT)/NT$. The resulting estimator for the number of change points is $\hat m = \arg \min IC(m)$. Note that the information criteria is defined at the OLS change point estimators for a given number of change points, so we estimate the number and location of changes in one step.

For proving that our method \textit{consistently estimates the number and location of change points in $\vgam_j^0$, the pseudo-true parameters defined below}, we impose the following assumptions. 

\begin{assn}\label{a1} As $N\rightarrow \infty$: (i) $(NT)^{-1/2}\sum_{i=1}^N \sum_{t=1}^T \vx_{it} \epsilon_{it} \ind \mathcal N(0, V)$, where $V$ is a positive definite (pd) matrix of constants; (ii) $N^{-1} \sum_{i=1}^N \epsilon_{it} c_{it} \inp 0$ ;(iii)
        $ N^{-1} \sum_{i=1}^N \vx_{it} c_{it}\inp \va_j^0 $ for $t \in I_j^0$, $j=1,\ldots, m^0+1$;
        (iv) $ N^{-1} \sum_{i=1}^N \vx_{it}\vx_{it}'\rightarrow \vQ_j^0 $ for $t\in I_j^0$, $j=1,\ldots, m^0+1$, where $ \vQ_j^0$ are pd matrices of constants; (v) let $\vgam_j^0 = \vbeta_j^0+ (\vQ_j^0)^{-1} \va_j^0$; then $\vgam_j^0 \neq \vgam_{j+1}^0$ for all $j=1,\ldots,m^0$; (vi)$ (NT)^{-1} \sum_{i=1}^N \sum_{t=1}^T  \epsilon_{it}^2 \inp \sigma^2_{\epsilon,T}$ and $ (NT)^{-1} \sum_{i=1}^N \sum_{t=1}^T  c_{it}^2 \inp \sigma^2_{c,T}$. \end{assn}

A\ref{a1}(i) imposes a central limit theorem for \black{sums} of $\vx_{it} \epsilon_{it}$, allowing for general time-series dependence and for weak cross-section dependence. A\ref{a1}(ii) assumes that if the time-varying individual specific effects $c_{it}$ are random variables, they are uncorrelated with the idiosyncratic errors $\epsilon_{it}$, a common assumption in panels with individual specific effects. \black{If they are parameters, then they are also allowed to vary over time, but they are omitted in the estimation.}

 A\ref{a1}(iii)-(v) are key assumptions for consistent estimation of the number of change points. Note that they are not very restrictive in the sense that $\vgam_j^0$ can change at each point in time, and it may change because of $\vbeta_j^0$ or not. \black{The allowed time variation in $c_{it}$ is implicitly defined by A\ref{a1}(v), allowing $c_{it}$ to exhibit change points, smooth time-variation and/or jumps. The specification for $c_{it}$ includes fixed effects ($c_{it}=c_i$), interactive fixed effects ($c_{it}=c_i f_t$), but also (other) forms of stationary or non-stationary time variation.  \textit{Since we define the change points as changes in $\vgam_j^0$}, as long as $\vgam_j^0$ does not change, any time-variation in $c_{it}$ will not result in a change point. If $\vgam_j^0$ changes because of change points in $c_{it}$, then these change points are identified by our method.}

A\ref{a1}(vi) is a weak law of large numbers for sums of the second moments of $\epsilon_{it}$ and $c_{it}$, ensuring they do not increase with $N$. In Assumption A\ref{a2}, we consider a common set of primitive assumptions used for panel data (such as survey data), when the data is independent over $i$. Lemma \ref{lem1} below shows that A\ref{a2} satisfies A\ref{a1}(i)-(iv) and A\ref{a1}(vi).

\begin{assn}\label{a2}
(i) $\vx_{it}$ and $\epsilon_{it}$ are independent over $i$ with $E(\epsilon_{it})=0$ and $E(\epsilon_{it} \vx_{it}) =0$; (ii) $\sup_{it} ||\vx_{it}||_{4+\delta} <\infty$ and $\sup_{it} ||\epsilon_{it}||_{4+\delta} <\infty$ for some $\delta>0$.
(iii) $E(\vx_{it} \vx_{is}'\epsilon_{it} \epsilon_{is}) =\vV_{ts}^0$; (iv) $E(\vx_{it} \vx_{it}') =\vQ_j^0$ for $t\in I_j^0$; (v) $c_{it}$ are independent over $i$, with $E(c_{it} \epsilon_{it})=0$, $E(\vx_{it} c_{it}) = \va_j^0$ for $t \in I_j^0$ and $\sup_{it} ||c_{it}||_{4+\delta}<\infty$ ; (vi) $E(\epsilon_{it}^2)=\sigma_t^2$; (vii) $E(c_{it}^2)=\sigma_{ct}^2.$\end{assn}

\begin{lemma}\label{lem1}
If A\ref{a2} holds, then A\ref{a1}(i)-(iv) and (vi) holds.
\end{lemma}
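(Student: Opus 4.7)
Each of A\ref{a1}(ii)--(iv) and (vi) is a weak law of large numbers for an average across $i$ of terms that are independent in $i$ with a common, identifiable mean and uniformly bounded second moments; A\ref{a1}(i) is a central limit theorem for a sum that is independent across $i$ but correlated across $t$. The plan is to use Chebyshev's inequality for the WLLN-type statements and the Lyapunov CLT combined with the Cram\'er--Wold device for A\ref{a1}(i), invoking H\"older's and Minkowski's inequalities to translate the $4+\delta$ moment bounds in A\ref{a2}(ii), (v) into second- and $(2+\delta/2)$-moment bounds on the relevant products.

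\textbf{WLLN parts.} Fix $t\in I_j^0$ and let $Z_{it}$ denote the summand in each case: $\epsilon_{it}c_{it}$ for (ii), $\vx_{it}c_{it}$ for (iii), $\vx_{it}\vx_{it}'$ for (iv), and $\epsilon_{it}^2$ or $c_{it}^2$ for (vi). By A\ref{a2}(i) and (v) the $Z_{it}$ are independent across $i$, and A\ref{a2}(iv)--(vii) provide $EZ_{it}$ as the claimed limit (zero for (ii)). Applying Cauchy--Schwarz to each product and invoking A\ref{a2}(ii) and (v), one obtains $\sup_{it}\|Z_{it}\|_{2+\delta/2}<\infty$, hence $\sup_{it}\mathrm{Var}(Z_{it})<\infty$. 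Therefore $\mathrm{Var}\bigl(N^{-1}\sum_{i=1}^N Z_{it}\bigr)=N^{-2}\sum_{i=1}^N\mathrm{Var}(Z_{it})=O(N^{-1})\to 0$, and Chebyshev's inequality delivers the stated convergence in probability. For A\ref{a1}(vi), averaging the per-$t$ limits over the finitely many values of $t$ yields $\sigma_{\epsilon,T}^2=T^{-1}\sum_t \sigma_t^2$ and $\sigma_{c,T}^2=T^{-1}\sum_t \sigma_{ct}^2$.

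\textbf{CLT for (i).} Define $\xi_i=\sum_{t=1}^T \vx_{it}\epsilon_{it}$. By A\ref{a2}(i), $\{\xi_i\}_{i=1}^N$ is an independent sequence with $E\xi_i=0$, and A\ref{a2}(iii) yields $\mathrm{Cov}(\xi_i)=\sum_{t=1}^T\sum_{s=1}^T\vV_{ts}^0=:\Sigma_T$, which does not depend on $i$. By Cauchy--Schwarz and Minkowski, $\sup_i\|\xi_i\|_{2+\delta/2}\le T\sup_{i,t}\|\vx_{it}\|_{4+\delta}\|\epsilon_{it}\|_{4+\delta}<\infty$. For any fixed $\vvc\in\mathbb{R}^p$ with $\vvc'\Sigma_T\vvc>0$, this bound gives a uniform bound on $E|\vvc'\xi_i|^{2+\delta/2}$, so the scalar Lyapunov ratio is of order $N^{-\delta/4}\to 0$. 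The Cram\'er--Wold device then produces $N^{-1/2}\sum_i\xi_i\ind \mathcal N(0,\Sigma_T)$, and dividing by $\sqrt{T}$ yields A\ref{a1}(i) with $V=T^{-1}\Sigma_T$.

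\textbf{Main obstacle.} The WLLN steps are mechanical once the moment bounds are unpacked, so the only delicate point is the CLT. Positive-definiteness of $V$ is not an automatic consequence of A\ref{a2} and must either be read in as an implicit regularity condition on $\{\vV_{ts}^0\}$ (equivalently, a non-degeneracy condition on the long-run variance of $\vx_{it}\epsilon_{it}$ in the time direction), or inferred from additional primitive structure such as non-singularity of $\sum_t \vQ_t^0$ under conditionally homoskedastic, serially uncorrelated errors. Routing the vector CLT through Cram\'er--Wold handles any degenerate directions cleanly and keeps the argument short.
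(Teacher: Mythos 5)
Your proof is correct and follows essentially the same route as the paper's: uniform moment bounds on the products obtained from the $4+\delta$ conditions via H\"older/Cauchy--Schwarz, a WLLN for independent arrays for parts (ii)--(iv) and (vi), and a Lyapunov-type CLT for the independent sums $\sum_{t}\vx_{it}\epsilon_{it}$ with covariance $\sum_{t,s}\vV_{ts}^0$ for part (i). Your added remarks --- the explicit Cram\'er--Wold step, the $T^{-1}$ normalization of $V$, and the observation that positive definiteness of $V$ is an implicit non-degeneracy condition not derivable from A2 alone --- are all accurate refinements of the paper's (terser) argument rather than departures from it.
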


\begin{theorem}\label{theo1}
Let A\ref{a1} hold. As $N\rightarrow \infty$,\\
(i) If $m=m^0$, $\lim P(\hat T_j = T_j^0)=1$  and $\hat \vbeta_{j, \hat \lambda_{m^0}} \inp \vgam_j^0$, for $j=1,\ldots, m^0+1$ \\
(ii) $\lim P(\hat m=m^0)=1$.\\
(iii) If $m>m^0$, then there are indices $j_1,\ldots j_{m^0} \in \{1,\ldots, m\}$ such that $\lim P[ \hat T_{j_s} = T_s^0] =1$ for all $s\in \{1,\ldots, m^0\}$.\\
\end{theorem}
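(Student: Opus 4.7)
The plan is to first reformulate \eqref{eq:model} so that A\ref{a1} becomes a standard least-squares problem for the pseudo-true parameters $\vgam_j^0$. Set $\eta_{it}=c_{it}-\vx_{it}'(\vQ_j^0)^{-1}\va_j^0$ for $t\in I_j^0$ and $\tilde u_{it}=\epsilon_{it}+\eta_{it}$. Then $y_{it}=\vx_{it}'\vgam_j^0+\tilde u_{it}$, and A\ref{a1}(iii)--(iv) give $N^{-1}\sum_{i=1}^{N}\vx_{it}\tilde u_{it}\inp 0$ for every $t$. For part (i), note that with $T$ fixed the set of candidate partitions $\vlam_{m^0}$ is finite, so it suffices to show $S_{NT}(\hat\vbeta_{\vlam_{m^0}},\vlam_{m^0})>S_{NT}(\hat\vbeta_{\vlam_{m^0}^0},\vlam_{m^0}^0)$ with probability tending to one for each wrong partition, and then to apply a union bound. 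On any sub-regime $\hat I_j=[\hat T_{j-1}+1,\hat T_j]$ of a candidate $\vlam_{m^0}$, A\ref{a1}(iii)--(iv) imply $\hat\vbeta_{j,\vlam_{m^0}}\inp \bar\vgam_j:=\big(\sum_{t\in\hat I_j}\vQ_{\pi(t)}^0\big)^{-1}\sum_{t\in\hat I_j}\vQ_{\pi(t)}^0\,\vgam_{\pi(t)}^0$, where $\pi(t)$ labels the true regime containing $t$. At $\vlam_{m^0}^0$ we have $\bar\vgam_j=\vgam_j^0$ and the normalized SSR converges to a constant $\sigma_\ast^2$ by A\ref{a1}(ii),(vi); at a wrong partition at least one $\hat I_j$ straddles a true break, and positive-definiteness of the $\vQ_j^0$'s combined with $\vgam_s^0\neq\vgam_{s+1}^0$ produces a strictly positive excess $\delta(\vlam_{m^0})$ in the SSR limit. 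The slope consistency $\hat\vbeta_{j,\hat\vlam_{m^0}}\inp\vgam_j^0$ then follows from a standard within-regime OLS argument using A\ref{a1}(i),(iii),(iv).

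For part (ii), split on $m<m^0$ and $m>m^0$. If $m<m^0$, every $m$-partition misses at least one true change point, so by the argument for (i) $S_{NT}(\hat\vbeta_{\hat\vlam_m},\hat\vlam_m)$ converges in probability to $\sigma_\ast^2+\delta_{\min}$ with $\delta_{\min}=\min_{\vlam_m}\delta(\vlam_m)>0$, while $S_{NT}(\hat\vbeta_{\hat\vlam_{m^0}},\hat\vlam_{m^0})\inp\sigma_\ast^2$; since $(p_m^\ast-p_{m^0}^\ast)\ell_{NT}\to 0$, the $\log$-SSR difference dominates and $IC(m)>IC(m^0)$ w.p.\ tending to one. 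If $m>m^0$, consider any $m$-partition refining $\vT_{m^0}^0$: by A\ref{a1}(i) the OLS slope in each refined sub-regime equals $\vgam_{\pi(\cdot)}^0+O_p(N^{-1/2})$, so $S_{NT}(\hat\vbeta_{\hat\vlam_m},\hat\vlam_m)-S_{NT}(\hat\vbeta_{\hat\vlam_{m^0}},\hat\vlam_{m^0})=O_p(1/N)$; a Taylor expansion of $\log$ then bounds the first term of $IC(m)-IC(m^0)$ by the same $O_p(1/N)$, which is dominated by the penalty gap $(p_m^\ast-p_{m^0}^\ast)\ell_{NT}$ because $N\ell_{NT}\to\infty$. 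Combining both cases yields $P(\hat m=m^0)\to 1$.

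For part (iii), argue by contradiction. Suppose with limiting positive probability some true $T_s^0$ is absent from $\{\hat T_1,\ldots,\hat T_m\}$; then on that event a sub-regime of $\hat\vlam_m$ strictly contains $T_s^0$, and the quadratic-form lower bound from (i) gives $S_{NT}(\hat\vbeta_{\hat\vlam_m},\hat\vlam_m)\geq \sigma_\ast^2+\bar\delta+o_p(1)$ for some $\bar\delta>0$. But any $m$-partition $\vlam_m^\dagger$ that refines $\vT_{m^0}^0$ satisfies $S_{NT}(\hat\vbeta_{\vlam_m^\dagger},\vlam_m^\dagger)\inp\sigma_\ast^2$, so the optimality $S_{NT}(\hat\vbeta_{\hat\vlam_m},\hat\vlam_m)\leq S_{NT}(\hat\vbeta_{\vlam_m^\dagger},\vlam_m^\dagger)$ is violated in the limit, a contradiction that proves (iii). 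The main obstacle throughout is justifying the strictly positive excess $\delta(\vlam_{m^0})$ whenever a fitted sub-regime crosses a true break: one must show that within-regime OLS cannot absorb the discontinuity in $\vgam^0$, and the excess takes the form of a quadratic form in $\vgam_s^0-\vgam_{s+1}^0$ whose kernel is a positive combination of the $\vQ_j^0$'s. This is precisely where A\ref{a1}(iv)--(v) enter essentially, and establishing it cleanly for every misidentified partition is the technical heart of the argument.
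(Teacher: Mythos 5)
Your strategy is essentially the paper's: compare the sum of squared residuals across the (finitely many, since $T$ is fixed) candidate partitions with its value at the true partition, show that any fitted regime straddling a change in $\vgam^0$ incurs a strictly positive asymptotic excess via A\ref{a1}(iv)--(v), show that over-fitting costs only $\Op(N^{-1})$ in the SSR, and let the penalty rates $\ell_{NT}\to 0$, $N\ell_{NT}\to\infty$ arbitrate. Your reformulation $y_{it}=\vx_{it}'\vgam_j^0+\tilde u_{it}$ with $N^{-1}\sum_i\vx_{it}\tilde u_{it}\inp 0$ is a cleaner packaging than the paper's explicit bookkeeping of $\vvc$ through $\vM_{\tilX^0}$, and your union bound over wrong partitions replaces the paper's contradiction argument ($\plim(II-III)\le 0$ by optimality of $\hat\vlam$ versus $\plim(II-III)>C$ at an inconsistent partition), but these are organizational rather than substantive differences.

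Two points deserve attention. First, the ``strictly positive excess'' is asserted rather than proved; you rightly flag it as the technical heart. The paper's Step 2 supplies the argument: it lower-bounds the SSR contribution of the two periods $T_k^0$ and $T_k^0+1$ adjacent to a missed break by the quadratic forms $(\vgam_{k+s}^0-\hat\vbeta_j)'\vQ_{k+s}^0(\vgam_{k+s}^0-\hat\vbeta_j)$, $s=0,1$, using the projection inequality $\bm d'\bm d\ge \bm d'\vM\bm d$, and then applies $\|a-c\|^2+\|b-c\|^2\ge \tfrac12\|a-b\|^2$ together with $\min\mathrm{eig}\,\vQ_{k+s}^0>0$ and A\ref{a1}(v) to obtain a bound proportional to $\|\vgam_{k+1}^0-\vgam_k^0\|^2>0$ that holds \emph{uniformly} in $\hat\vbeta_j$; uniformity matters because the fitted slope adapts to the partition, and your $\bar\vgam_j$ computation would still need this step. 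Second, in the $m>m^0$ branch of part (ii) you only analyze partitions that refine $\vT_{m^0}^0$, but the minimizer $\hat\vlam_m$ need not be of that form. The paper closes this with a case split: either every true change point is matched by a consistent estimator, in which case the $\Op((NT)^{-1})$ bound and the condition $N\ell_{NT}\to\infty$ apply, or at least one is missed, in which case the Step 2 constant excess makes $IC(m)>IC(m^0)$ immediately. Your part (iii) contradiction supplies exactly the missing half, so the pieces assemble correctly, but that dependence should be made explicit rather than left implicit in the ordering of the argument.
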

Part (i) states that if we knew the true number of change points, their locations would be consistently estimated, and the corresponding parameter estimates would be consistent for their pseudo-true values $\vgam_j^0$. Part (ii) states that the true number of change points is \black{consistently estimated by the information criterion we propose.} Part (iii), a by-product of our proof, shows that if the number of changes imposed is larger than the true number of the changes, then our method estimates all the true change points with probability one in the limit (and some additional spurious change points).\footnote{This implies that an AIC variant of our information criterion, with penalty $\ell_{NT} =2/(NT)$, can also be used if the researcher is worried that the number of change points is underestimated.} 

The intuition for the result in Theorem \ref{theo1} is similar to Perron and Yamamoto (2015) who proposed using OLS for estimating change points in time series models with regressors that are correlated with errors. While the parameter estimates are in general not consistent for their true values because of the omitted variable bias, they are consistent for the pseudo-true values $\vgam_j^0$, therefore we can consistently estimate the number and location of change points in $\vgam_j^0$. Note that unlike Perron and Yamamoto (2015), we propose an  information criterion that consistently estimates the number of change points in one step. In contrast to Perron and Yamamoto (2015), we can also allow for a change point in each period.

The advantage of our method over Qian and Su (2016) and Li, Qian and Su (2016) is that we allow for time-varying individual effects without \black{specifying a functional form for the time variation.} Moreover, if some covariates are time-invariant as typical in panel data (gender, race), then the method in Qian and Su (2016), relying on first-differencing the data prior to change point estimation, only benefits from one period to estimate a change in the coefficients on these covariates, while our method uses all the periods available. Further advantages are highlighted in the simulation section, where we show that our method is more precise in estimating the number of change points \black{in finite samples}, because it does not remove important variation in the data by first-differencing. 

We now discuss the results in Theorem \ref{theo1} in connection to \black{assumption} A\ref{a1} and typical panel data assumptions. In its full generality outlined above, the method in this section does not yet indicate which changes occur only in the slope parameters $\vbeta_j^0$. However, because this may be of main interest to the applied researcher, the next section discusses under what conditions $\vbeta_j^0$ can be consistently estimated by two demeaning procedures. We then suggest using these estimators to test $H_0: \vbeta_j^0=\vbeta_{j+1}^0$, $j=1,\ldots, m^0$, therefore identifying which changes pertain to $\vbeta_j^0$ only.

\textit{In special cases, the changes in $\vbeta_j^0$ can be directly identified by the methods of this section.} The first case is when $c_{it}$ are random effects; in that case, they are uncorrelated with $\vx_{it}$, in which case a weak law of large numbers can be employed to show that $\va_j^0 = 0$, therefore that $\vgam_j^0=\vbeta_j^0$. The second case is if $\vQ_j^0=\vQ^0$ and $\va_j^0=\va^0$. In this case, the correlation between the individual specific effects and the regressors does not change over time, and therefore all the changes in $\vgam_j^0$ come from changes in $\vbeta_j^0$, and no further testing is needed.

Our method can also be used as a diagnostic tool for modeling either time-varying parameters or time-varying individual effects. If $\hat m$ is large (close or equal to $T-1$), then  \eqref{eq:model} should be revisited for better modelling of the time-variation in $\vgam_j^0$.  If $\hat m$ is small, then a model with interactive fixed effects, i.e. $c_{it} = c_i \times f_t$ might not be inappropriate unless it can be assumed that despite this specification, $\va_j^0$  does not change often. If the researcher is further willing to assume fixed effects (i.e. $c_{ij}=c_i$),  a common assumptions in panel data, then the next section provides a more efficient estimator of $\vbeta_j^0$ than currently available. 

If the covariates include a lagged dependent variable $y_{it-1}$, then $E(y_{it-1} c_{it})$ would in general change in each period, leading to $m^0=T-1$ by definition. Therefore, in our analysis, we do no include lagged dependent variables, but allow for time-series dynamics in the error term. \black{Employing time dummies for each time period, a common approach in short panels, is equivalent to imposing a change point in the intercept at each time period, which is not parsimonious nor necessarily justified by the data. We suggest avoiding this approach as our method can directly estimate the number and location of the changes in the intercept without having to assume they change at each period.}

\section{Slope Estimators and Their Asymptotic Properties}\label{sec:fe}
In this section, we proceed as if the true number and location of change points in $\vgam_j^0$ was known. For implementing the estimators described below, the true number and location of change points should be replaced by their corresponding estimates from Section 2.

To get consistent estimators of the slope parameters, we follow the common approach to remove the individual effects from model \eqref{eq:model} through a transformation of the data. \black{However, for proper removal of these effects, we assume throughout this section that $c_{it} = c_{ij}$ for $t\in I_j^0$, meaning that the individual effects are allowed to change but only at the change points already detected.\footnote{This assumption can be generalized to allow for further time variation in $c_{it}$. However, in this case, any transformations such as demeaning  would only remove the mean of $c_{it}$, leaving its variance in the error term. This variance would increase the variance of the slope estimators, but in general it would not be consistently estimable for fixed $T$ without further assumptions.}}

 We propose two methods: (1) sub-sample demeaning, that is, demeaning over segments $I_j^0$, corresponding to the usual fixed effects estimator in the absence of change points; and (2) full sample demeaning, which is new. The former is appropriate whether $\va_j^0$ is changing over $j$ or not, and the latter only when we have time-invariant fixed effects, i.e. $c_{it}=c_i$. 

 Note that only parameters that are constant for more than one period can be estimated via sub-sample demeaning, because the others are automatically removed. In contrast, when $c_{it}=c_i$, the full-sample demeaning estimator identifies all the parameters, including the ones for which only one time period is available. To our best knowledge, the full-sample demeaning estimator is new, and it is imposing the additional information that the fixed effects are not changing over time, which in principal should lead to more efficient estimators. In Theorem \ref{theo3}, we give sufficient conditions for this second estimator to be strictly more efficient than the first. Below, we describe these estimators.

For any vector $\vz_{it}$, let $ \overline \vz_i = T^{-1} \sum_{t=1}^T \vz_{it} $ be the full sample average, and $\overline \vz_{ij} = (T_j^0-T_{j-1}^0)^{-1} \sum_{t\in I_j^0} \vz_{it}$ be the sub-sample averages.  Then the FE estimator in $I_j^0$ is the OLS estimator in the demeaned sub-sample $I_j^0$ of model \eqref{eq:model}: $$\hat{\vbeta}_{FE,j}  = \left(\sum_{i=1}^N \sum_{t \in I_j^0} \ (\vx_{it} - \overline \vx_{ij})(\vx_{it} - \overline \vx_{ij})' \right)^{-1} \sum_{i=1}^N \sum_{t \in I_j^0}\ (\vx_{it} - \overline \vx_{ij})(y_{it}-\overline y_{ij}).$$

Let $\vw_{ij} =T^{-1}\sum_{t\in I_j^0} \vx_{it}$. If we demean model \eqref{eq:model} over the full sample, then
\begin{align*}
y_{it} -\overline y_i &=
(\vx_{it}-\vw_{ij})'\vbeta_j^0 - \sum_{s=1,s\neq j}^{m^0+1} \vw_{is}' \vbeta_s^0 + (\epsilon_{it}-\overline \epsilon_i), t\in I_j^0.
\end{align*}
Let $ y_{it}^* = y_{it} - \overline y_i $, $ \epsilon_{it}^* = \epsilon_{it} - \overline \epsilon_i $, and $ \widetilde \vx_{it} =(-\vw_1', -\vw_2',\ldots, -\vw_{j-1}',\vx_{it}' - \vw_j', - \vw_{j+1}',\ldots, -\vw_{m^0+1}')'.$ for $t\in I_j^0$. This model can be written more compactly as:
$
y_{it}^* = \widetilde \vx_{it}\vbeta^0 + \epsilon_{it}^*$, and the OLS estimator in this equation is the full sample demeaning estimator, which we name as the FFE (full sample fixed effects) estimator: 
$$\hat{\vbeta}_{FFE,j} = \left(\sum_{i=1}^N \sum_{t=1}^T \widetilde \vx_{it} \widetilde \vx_{it}'\right)^{-1}
\left(\sum_{i=1}^N \sum_{t=1}^T \widetilde \vx_{it} y_{it}^*\right).
$$
Let $\mathcal S$ be the subset of the number of regimes with at least two observations, and denote a quantity defined over $\mathcal S$ by a subscript $\mathcal S$; for example, $\vbeta_{\mathcal S}^0 = \vc_{j\in \mathcal S}(\vbeta_j^0)$.

\begin{assn}\label{a3} As $N \rightarrow \infty$, (i) $  N^{-1} \sum_{i=1}^N \vx_{it}\vx_{is}' \inp \vOmega_{ts}^0 $, a matrix of constants, for all $ t,s $; (ii) $  N^{-1/2}  \sum_{i=1}^N  \vc_{j \in \mathcal S}\left(\sum_{t\in I_j^0} (\vx_{it}-\overline \vx_{i,j})\epsilon_{it}\right) \ind \mathcal N(0,\vW_{1,\mathcal S}) $; (iii) $ N^{-1/2}  \sum_{i=1}^N \left(\sum_{t=1}^T \widetilde \vx_{it} \epsilon_{it}^* \right) \ind \mathcal N(0, \vW_{2})$.\footnote{The quantities $\vW_1$ and $\vW_2$ may depend on $T$, but for simplicity we do not explicitly express this in the notation.}
\end{assn}

Assumption A\ref{a3} facilitates the presentation of the asymptotic distributions for general time series dependence (including unit root dependence over $t$ in $\epsilon_{it}$) and weak cross-section dependence. Assumptions \ref{a4} gives a primitive assumption that satisfies Assumption A\ref{a3}. Let $\vX_i=\vc_{t=1:T}(\vx_{it})$. Then:

\begin{assn}\label{a4}
A\ref{a1}(i)-(iv), (vi) holds and: (i) $E(\epsilon_{it}\vX_i) = 0$ for all $i,t$; (ii) $E(\vx_{it} \vx_{is}') =\vOmega_{ts}^0$; (iii) $E(\vx_{it}\vx_{is}' \epsilon_{it} \epsilon_{is}) = \bm N_{ts}^0$.\end{assn}

\begin{lemma}\label{l1}
If A\ref{a4} is satisfied, then A\ref{a3} are satisfied. 
\end{lemma}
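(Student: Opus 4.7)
The plan is to verify the three parts of A\ref{a3} separately, using the cross-moment conditions in A\ref{a4} together with the independence over $i$ and the uniformly bounded $(4+\delta)$-moments supplied by A\ref{a2} (which is the primitive structure sitting behind the A\ref{a1} conditions that A\ref{a4} invokes). For part (i), I would set $\bm{M}_i = \vx_{it}\vx_{is}' - \vOmega_{ts}^0$; by A\ref{a4}(ii) each $\bm{M}_i$ has mean zero, by A\ref{a2}(i) they are independent across $i$, and by A\ref{a2}(ii) every entry has bounded second moment, so a direct Chebyshev argument yields $N^{-1}\sum_{i=1}^N \bm{M}_i \inp 0$, which is precisely A\ref{a3}(i).

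For part (ii), let $\bm{Z}_i = \vc_{j\in\mathcal{S}}\bigl(\sum_{t\in I_j^0}(\vx_{it}-\overline{\vx}_{i,j})\epsilon_{it}\bigr)$. Cross-sectional independence is inherited from $(\vx_{it},\epsilon_{it})$. Mean zero follows from A\ref{a4}(i), which asserts $E(\epsilon_{it}\vX_i)=0$ and hence $E(\epsilon_{it}\vx_{is})=0$ for every $s$, so $E[(\vx_{it}-\overline{\vx}_{i,j})\epsilon_{it}]=0$. The limit covariance $\vW_{1,\mathcal{S}}$ is then assembled block-wise as finite linear combinations of the matrices $\bm{N}_{ts}^0$ supplied by A\ref{a4}(iii). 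To invoke a multivariate CLT for independent triangular arrays, I would verify Lyapunov's condition at exponent $2+\delta/2$, which amounts to bounding $E\|\bm{Z}_i\|^{2+\delta/2}$ uniformly in $i$; this follows from Hölder's inequality applied to the products $\vx_{it}\epsilon_{it}$ using the $(4+\delta)$-moment bounds of A\ref{a2}(ii).

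Part (iii) proceeds identically for $\widetilde{\bm Z}_i = \sum_{t=1}^T \widetilde{\vx}_{it}\epsilon_{it}^{*}$. Cross-sectional independence is again inherited; mean zero follows once more from A\ref{a4}(i), because every block of $\widetilde{\vx}_{it}$ is a linear combination of the vectors $\vx_{is}$ and $\epsilon_{it}^{*}=\epsilon_{it}-\overline{\epsilon}_i$ is orthogonal in expectation to each such combination; the covariance $\vW_2$ is again built from the $\bm{N}_{ts}^0$'s via A\ref{a4}(iii); and the same Hölder argument delivers Lyapunov's condition.

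The main obstacle is essentially bookkeeping rather than a conceptual hurdle: writing $\vW_{1,\mathcal{S}}$ and $\vW_{2}$ cleanly as linear combinations of the $\bm{N}_{ts}^0$'s, and tracking the moment calculation through the finite linear combinations that define $\bm{Z}_i$ and $\widetilde{\bm{Z}}_i$. Because $T$ is fixed every sum in sight is finite, so the $(4+\delta)$-moment assumption on $\vx_{it}$ and $\epsilon_{it}$ comfortably controls $\|\bm{Z}_i\|_{2+\delta/2}$ via Hölder with the split $1/(2+\delta/2)\ge 1/(4+\delta)+1/(4+\delta)$, and no further structural assumption beyond A\ref{a4} and the A\ref{a2} backbone is required.
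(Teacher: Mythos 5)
Your proposal matches the paper's own proof in both structure and substance: for each part of A\ref{a3} you center the summands using A\ref{a4}(i)--(ii), invoke cross-sectional independence, bound the $(2+\delta/2)$-moments via H\"older from the $(4+\delta)$-moment bounds, and apply a WLLN (part (i)) or a Lyapunov-type CLT (parts (ii)--(iii)) with the covariance assembled from the $\bm N_{ts}^0$ of A\ref{a4}(iii). No meaningful differences; the proof is correct at the same level of detail the paper itself provides.
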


Let $\Delta T_j^0 = T_j^0-T_{j-1}^0$, $\vOmega_{1,\mathcal S} =\diag_{j\in \mathcal S}\left( \Delta T_j^0 \vQ_j^0 - (\Delta T_j^0)^{-1} \vQ_{jj}^0 \right)$, $\vQ_{jj}^0=\sum_{t\in I_j^0}\sum_{s\in I_k^0} \vOmega_{ts}^0$, and let $\vOmega_2= \diag_{j=1:m^0+1}(\Delta T_j^0 \vQ_j^0) - (2T^{-1} -T^{-2}) \widetilde \vQ^0,$
where $\widetilde \vQ^0$ is the $p(m^0+1) \times p(m^0+1)$ matrix with the $(j,k)$ sub-matrix of size $p\times p$ equal to $\vQ_{jk}^0$.

\begin{theorem}\label{theo2}
Under A\ref{a1}, A\ref{a3} and $N\rightarrow \infty$,\\
 (i) 
$
\sqrt{N} \left( \hat{\vbeta}_{FE,\mathcal S} - \vbeta_{\mathcal S}^0\right) \stackrel{d}{\rightarrow} \mathcal{N}(0,\bm V_{FE,\mathcal S}),
$
where $\bm V_{FE,\mathcal S} = \vOmega_{1,\mathcal S}^{-1} \vW_{1,\mathcal S}\vOmega_{1,\mathcal S}^{-1} $; (ii) further assuming that $c_{it}=c_i$ for all $t=1,\ldots, T$, 
$
\sqrt{N} (\hat{\vbeta}_{FFE} -\vbeta^0) \ind \mathcal N\left (0, \bm V_{FFE}\right), $
where $\bm V_{FFE} = \vOmega_2^{-1}\vW_{2} \vOmega_2^{-1}$.
\end{theorem}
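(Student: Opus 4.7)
The plan is to apply the standard OLS expansion to each estimator and then combine a law of large numbers for the Hessian with a central limit theorem for the score. In both parts, the maintained assumption $c_{it}=c_{ij}$ for $t\in I_j^0$ means that sub-sample demeaning annihilates the effect $c_{ij}$, which is constant inside $I_j^0$; for the FFE estimator we further exploit $c_{it}=c_i$, so full-sample demeaning annihilates $c_i$. Hence in both cases the score depends only on the (demeaned) regressors and idiosyncratic errors, and the individual effects do not enter the limiting variance.

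For part (i), substituting the model into $\hat\vbeta_{FE,j}$ and using $\sum_{t\in I_j^0}(\vx_{it}-\overline\vx_{ij})=0$ to drop the $\overline\epsilon_{ij}$ term gives
\[
\sqrt N\bigl(\hat\vbeta_{FE,j}-\vbeta_j^0\bigr)=(N^{-1}\vA_j)^{-1}\,N^{-1/2}\sum_{i=1}^N\sum_{t\in I_j^0}(\vx_{it}-\overline\vx_{ij})\epsilon_{it},
\]
with $\vA_j=\sum_i\sum_{t\in I_j^0}(\vx_{it}-\overline\vx_{ij})(\vx_{it}-\overline\vx_{ij})'$. I would expand $\vA_j$ as $\sum_{t\in I_j^0}\vx_{it}\vx_{it}' - (\Delta T_j^0)^{-1}\bigl(\sum_{t\in I_j^0}\vx_{it}\bigr)\bigl(\sum_{t\in I_j^0}\vx_{it}\bigr)'$; dividing by $N$, A\ref{a1}(iv) controls the first term (limit $\Delta T_j^0\vQ_j^0$) and A\ref{a3}(i) controls the second (limit $(\Delta T_j^0)^{-1}\vQ_{jj}^0$), so the Hessian's probability limit is $\vOmega_{1,j}$. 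Stacking over $j\in\mathcal S$ makes the limit block-diagonal with blocks $\vOmega_{1,j}$; the joint CLT for the stacked scores is A\ref{a3}(ii); Slutsky and the continuous mapping theorem then yield the sandwich variance $\vOmega_{1,\mathcal S}^{-1}\vW_{1,\mathcal S}\vOmega_{1,\mathcal S}^{-1}$.

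For part (ii), under $c_{it}=c_i$ the demeaned model is $y_{it}^*=\widetilde\vx_{it}'\vbeta^0+\epsilon_{it}^*$, producing the usual expansion with Hessian $\sum_{i,t}\widetilde\vx_{it}\widetilde\vx_{it}'$ and score $\sum_{i,t}\widetilde\vx_{it}\epsilon_{it}^*$. The central computation is the $(j,k)$ block of $\sum_t\widetilde\vx_{it}\widetilde\vx_{it}'$ for a single $i$: for each $t$ the contribution depends on which regime $I_{j'}^0$ contains $t$ and on whether $j'$ equals $j$ or $k$. Summing the four cases ($j'=j=k$; $j'=j\neq k$; $j'=k\neq j$; $j'\notin\{j,k\}$) and using $\sum_{t\in I_{j'}^0}\vx_{it}=T\vw_{ij'}$ collapses the $(j,k)$ block to a linear combination of $\sum_{t\in I_j^0}\vx_{it}\vx_{it}'$ (present only when $j=k$) and $\bigl(\sum_{t\in I_j^0}\vx_{it}\bigr)\bigl(\sum_{s\in I_k^0}\vx_{is}\bigr)'$. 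Taking $N^{-1}$ limits via A\ref{a1}(iv) and A\ref{a3}(i) and assembling the blocks produces $\vOmega_2$. The score's weak limit is $\mathcal N(0,\vW_2)$ by A\ref{a3}(iii), and Slutsky delivers the stated variance $\vOmega_2^{-1}\vW_2\vOmega_2^{-1}$.

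The main obstacle will be the block-by-block identification of the FFE Hessian's limit. Unlike the FE case, every regime $I_{j'}^0$ contributes to every $(j,k)$ block via the full-sample demeaning cross-term, so one must carefully bookkeep contributions from all $j'$ (including the cross cases $j'=j$ and $j'=k$) to land on the precise scalar coefficient on $\widetilde\vQ^0$ in $\vOmega_2$. The remaining steps are routine moment manipulations and direct appeals to the high-level CLTs in A\ref{a3}, with Lemma \ref{l1} available to justify those CLTs from the primitive conditions in A\ref{a4}.
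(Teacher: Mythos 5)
Your proposal is correct and follows essentially the same route as the paper's proof: the standard OLS expansion, identification of the Hessian limits $\vOmega_{1,\mathcal S}$ and $\vOmega_2$ via A\ref{a1}(iv) and A\ref{a3}(i) (including the same four-case bookkeeping of the $(j,k)$ blocks through $\vw_{ij}$ for the FFE Hessian), and the high-level CLTs in A\ref{a3}(ii)--(iii) combined with Slutsky to obtain the sandwich variances. No substantive differences from the paper's argument.
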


\black{It is interesting to note that even though the demeaning removes time-invariant regressors such as gender and race for both estimators, the magnitude of change in the slopes of these regressors can be consistently estimated.\footnote{The elements of $\bm {\hat \beta}_{FE,\mathcal S}$ and $\bm \hat \beta_{FFE}$ in Theorem \ref{theo2} referring to time-invariant regressors should be replaced by magnitudes of changes, but we did not do this to simplify notation.}}

With no further assumptions on the time series dependence, it is unclear which estimator is more efficient; therefore, when it can be assumed that $c_{it}=c_i$, we suggest stacking the moment conditions implied by these two estimators, resulting in a generalized method of moments estimator which is more efficient than each of the two if the optimal weighting matrix is used.

Theorem \ref{theo2} shows that when the data is uncorrelated over time (as in typical static panels), the FFE estimator is strictly more efficient than the FE estimator, so if it can safely be assumed that $c_{it}=c_i$, then the FFE estimator is preferable. Since this result pertains only to panel data models with at least one change point and with two parameters that can be identified by FFE, we impose $T\geq 4$, $m^0\geq 1$ and $\Delta T_j^0 \geq 2$ for at least two regimes.

\begin{theorem}\label{theo3}
Let A\ref{a4} hold, $m^0>1$, $T\geq 4$ and $\Delta T_j^0 \geq 2$ for at least two regimes $j \in\{1,\ldots, m^0+1\}$. Assume that $E(\epsilon_{it}^2|\vX_i) = \sigma^2$ for all $t$, and $E(\epsilon_{it}\epsilon_{is}|\vX_i)=0$ and $E(\vx_{it} \vx_{is})=\vOmega_{ts}^0=0$ for all $t\neq s$. Then $$\bm V_{FE} = \diag_{1:m^0+1}\left[\sigma^2 (\vQ_j^0)^{-1} \frac{1}{\Delta T_j^0-1}\right], 
\bm V_{FFE} = \diag_{1:m^0+1}\left[\sigma^2 (\vQ_j^0)^{-1}\frac{(T^2-3T+1) T^2}{(T-1)^4} \frac{1}{\Delta T_j^0}\right],$$ and $\bm V_{FFE}- \bm V_{FE}$ is negative definite.

\end{theorem}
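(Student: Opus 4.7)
The plan is to specialize the sandwich formulas $\bm V_{FE,\mathcal S}=\vOmega_{1,\mathcal S}^{-1}\vW_{1,\mathcal S}\vOmega_{1,\mathcal S}^{-1}$ and $\bm V_{FFE}=\vOmega_2^{-1}\vW_2\vOmega_2^{-1}$ from Theorem \ref{theo2} to the three restrictions imposed here: conditional homoskedasticity $E(\epsilon_{it}^2|\vX_i)=\sigma^2$, serial uncorrelation of errors $E(\epsilon_{it}\epsilon_{is}|\vX_i)=0$ for $t\neq s$, and serial uncorrelation of regressors $\vOmega_{ts}^0=0$ for $t\neq s$. Under these, all four matrices collapse to block-diagonal forms in which each $j$th block is a positive scalar times $\vQ_j^0$, and the sandwich then yields blocks proportional to $(\vQ_j^0)^{-1}$. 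The matrix comparison thereby reduces to a scalar inequality in each regime.

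I would begin with the moment matrices. The cross-regime quantities $\vQ_{jk}^0=\sum_{t\in I_j^0}\sum_{s\in I_k^0}\vOmega_{ts}^0$ that enter $\vOmega_{1,\mathcal S}$, $\vOmega_2$, and $\widetilde\vQ^0$ collapse to $\Delta T_j^0\vQ_j^0$ when $j=k$ and vanish otherwise, so $\vOmega_{1,j}=(\Delta T_j^0-1)\vQ_j^0$ and $\vOmega_2$ is block-diagonal with each $j$th block a scalar multiple of $\Delta T_j^0\vQ_j^0$. For $\vW_{1,\mathcal S}$, conditioning on $\vX_i$ annihilates $E(\epsilon_{it}\epsilon_{is}|\vX_i)$ whenever $t\neq s$, so only $t=s$ survives and cross-block $(j,k)$ terms with $j\neq k$ vanish; the diagonal block reduces to $\sigma^2\sum_{t\in I_j^0}E[(\vx_{it}-\overline\vx_{i,j})(\vx_{it}-\overline\vx_{i,j})']$, which under the no-cross-period assumption becomes a scalar multiple of $\vQ_j^0$. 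For $\vW_2$ the key simplification is that the $j$th block of $\sum_{t=1}^T\widetilde\vx_{it}\epsilon_{it}^*$ collapses to $\sum_{t\in I_j^0}\vx_{it}\epsilon_{it}^*$ because the cross-regime $-\vw_{ik}$ pieces multiply $\sum_{t=1}^T\epsilon_{it}^*=0$; substituting $E(\epsilon_{it}^{*2}|\vX_i)=\sigma^2(T-1)/T$ and $E(\epsilon_{it}^*\epsilon_{is}^*|\vX_i)=-\sigma^2/T$ and invoking $\vOmega_{ts}^0=0$ again kills all off-diagonal contributions, leaving $\vW_2$ block-diagonal with each block a scalar times $\Delta T_j^0\vQ_j^0$.

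With all four matrices block-diagonal and each diagonal block a positive scalar multiple of $\vQ_j^0$, the sandwich formulas give diagonal blocks of $\bm V_{FE,j}$ and $\bm V_{FFE,j}$ of the form $\sigma^2 c_j(T,\Delta T_j^0)(\vQ_j^0)^{-1}$, and the stated expressions for $\bm V_{FE}$ and $\bm V_{FFE}$ follow by direct simplification. Since both matrices are block-diagonal, $\bm V_{FE}-\bm V_{FFE}$ is block-diagonal as well, and its sign in each block is determined by the scalar difference. Putting the two fractions over a common denominator, this scalar difference has the same sign as $T-\Delta T_j^0$ (up to strictly positive factors); this is strictly positive because $m^0\geq 1$ forces at least two regimes and hence $\Delta T_j^0<T$ for every $j$. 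Consequently every diagonal block of $\bm V_{FE}-\bm V_{FFE}$ is a positive multiple of the positive-definite matrix $(\vQ_j^0)^{-1}$, and the full matrix is positive definite, equivalently $\bm V_{FFE}-\bm V_{FE}$ is negative definite.

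The main obstacle is the bookkeeping for $\vOmega_2$ and $\vW_2$: the $-\vw_{ik}$ pieces inside $\widetilde\vx_{it}$ couple all regimes simultaneously, so the off-diagonal block cancellations — driven by the identity $\sum_{t=1}^T\epsilon_{it}^*=0$ and by the no-cross-period regressor assumption — have to be verified before the block-diagonal simplifications are available. Once those cancellations are in place, the remaining algebra is mechanical and the final inequality $T>\Delta T_j^0$ is immediate from the existence of at least two regimes.
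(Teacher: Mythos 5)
Your overall strategy coincides with the paper's: specialize the sandwich forms $\bm V_{FE}=\vOmega_{1}^{-1}\vW_{1}\vOmega_{1}^{-1}$ and $\bm V_{FFE}=\vOmega_2^{-1}\vW_{2}\vOmega_2^{-1}$ from Theorem \ref{theo2}, note that every block collapses to a scalar times $\vQ_j^0$, and reduce the definiteness claim to a scalar comparison per regime. The $\bm V_{FE}$ half goes through exactly as in the paper. The problem is the $\bm V_{FFE}$ half, where you assert that ``the stated expressions follow by direct simplification'' without doing the algebra, and this is precisely where your route and the stated result part company. Your cancellation $\sum_{t=1}^T\widetilde \vx_{it}\epsilon_{it}^*=\vc_{j}\bigl(\sum_{t\in I_j^0}\vx_{it}\epsilon_{it}^*\bigr)$ (valid because the $-\vw_i$ piece multiplies $\sum_t\epsilon_{it}^*=0$) together with $E(\epsilon_{it}^*\epsilon_{is}^*|\vX_i)=\sigma^2(\mathbf 1\{t=s\}-T^{-1})$ and $\vOmega_{ts}^0=0$ yields $\vW_{2}=\sigma^2(1-T^{-1})\diag_{1:m^0+1}(\Delta T_j^0\vQ_j^0)$, and combining this with a block of $\vOmega_2$ proportional to $\Delta T_j^0\vQ_j^0$ produces a per-block constant of the form $cT/[(T-1)\Delta T_j^0]$ --- not the constant $(T^2-3T+1)T^2/[(T-1)^4\Delta T_j^0]$ appearing in the theorem, and the two are not equal for any $T$. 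The paper instead computes $\vW_{2}$ by expanding $E[\widetilde\vx_{it}\widetilde\vx_{is}'(\epsilon_{it}-\overline\epsilon_i)(\epsilon_{is}-\overline\epsilon_i)]$ term by term over all $(t,s)$ pairs and arrives at $\vW_{2}=\sigma^2\frac{T^2-3T+1}{T^2}\diag_{1:m^0+1}(\Delta T_j^0\vQ_j^0)$, which is what generates the stated constant once sandwiched with the paper's $\vOmega_2=\frac{(T-1)^2}{T^2}\diag_{1:m^0+1}(\Delta T_j^0\vQ_j^0)$. So either your cancellation or the paper's bookkeeping must be reconciled; as written, your proposal does not establish the formula for $\bm V_{FFE}$ that the theorem asserts, and you cannot paper over the divergence with ``direct simplification.''

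A second, smaller gap: your negative-definiteness argument (``the scalar difference has the same sign as $T-\Delta T_j^0$ up to strictly positive factors'') is only true for the constant your own computation produces. For the constant actually stated in the theorem, the scalar comparison $\frac{(T^2-3T+1)T^2}{(T-1)^4\Delta T_j^0}<\frac{1}{\Delta T_j^0-1}$ is not a one-line sign check; the paper needs several lines of polynomial manipulation, using $T\ge 4$ and $\Delta T_j^0<T$, to reduce the numerator to $-T(2T-1)(T-1)<0$. If you intend to prove the theorem as stated, that inequality must be carried out explicitly rather than asserted by analogy with the simpler constant.
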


Theorem \ref{theo3} shows that the relative efficiency of the FFE estimator can be explicitly quantified.  

\textbf{Overall modelling strategy. }Since this section provides consistent estimators in each period with more than one observation, one can test which parameters are actually changing by testing $H_0: \vbeta_j^0 = \vbeta_{j+1}^0$ for $j \in \mathcal S$, for example, via a Wald test at the level $\alpha$. Using a simple Bonferroni correction to correct for multiple testing, the overall size of the testing procedure is no larger than $\alpha \hat m$. Also note that $H_0: \vQ_j^0 = \vQ_{j+1}^0$ for $j=1,\ldots, m^0+1$ is testable via a Wald test. So if no changes in $\vbeta_j^0, \vQ_j^0$ are detected for two adjacent regimes, it means that all changes are coming from the individual effects, offering evidence for time-varying individual effects. Similarly, one can identify which change points pertain to the intercept alone by testing only the first restriction in $H_0: \vbeta_j^0 = \vbeta_{j+1}^0$, informing the researcher which periods actually need time dummies. \black{If one is worried about post-model selection issues after the estimation of the number of change points, then one should impose more change points than found by the HQIC.}

\section{Simulation Study}
\label{sec:sim}

This section looks at the finite sample performance of $\hat{\mathbf T}_{\hat m}$, $ \hat{\vbeta}_{OLS} \equiv \hat \vbeta$, $ \hat{\vbeta}_{FE} $ and $ \hat{\vbeta}_{FFE} $. The data generating process (DGP) is based on model (\ref{eq:model}) \black{with fixed effects:} $c_{it} = c_i$. The idiosyncratic errors $\epsilon_{it}$ and $c_i$ are independently drawn from $N(0,1/4)$. A single regressor is generated as $x_{it}=\sqrt{2}c_i+z_{it}$, with $z_{it}\sim iid~N(0,1/2)$. The vector of slope parameters $\vbeta^0 $ has elements alternating between $-0.1$ and $0.1$ for the different regimes between change points. For the case of one change point, we consider $T^0_1=2$ and $T^0_1=[T/3]$, where $[\cdot]$ is the least integer function; \black{for two change points, we let $T_1^0=[T/3]$ and $T_2^0=[2T/3]$. We let  $N=50,100,500$ and $T=20,30,50$. All results are reported for $1000$ replications.}

\begin{figure}[h!]
\caption{Histogram of estimated locations for a single change point}
\label{fig:case1}
\vspace{3mm} 
\begin{minipage}[b]{0.49\linewidth}
\centering
\scriptsize{$T=20$, change point at $[T/3]$}\\ \vspace{-0.3mm} 
\includegraphics[scale=0.6]{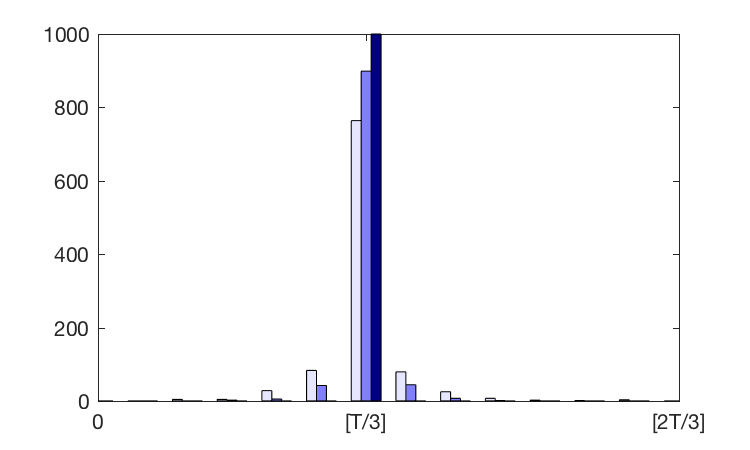}\\
\scriptsize{$T=30$, change point at $[T/3]$}\\ \vspace{-0.3mm} 
\includegraphics[scale=0.6]{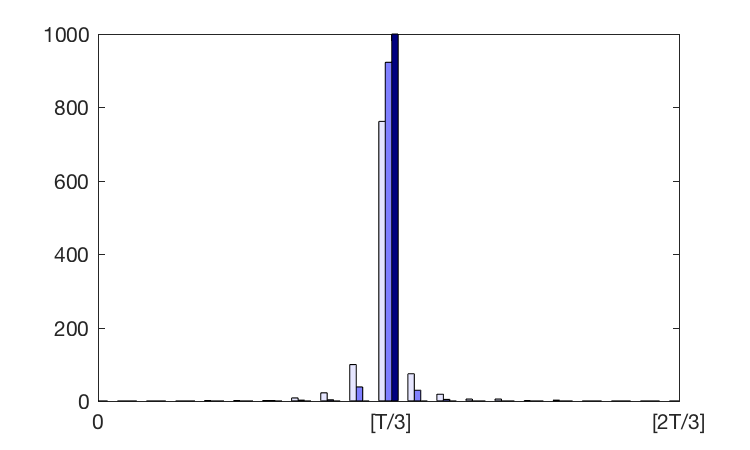}\\
\scriptsize{$T=50$, change point at $[T/3]$}\\ \vspace{-0.3mm} 
\includegraphics[scale=0.6]{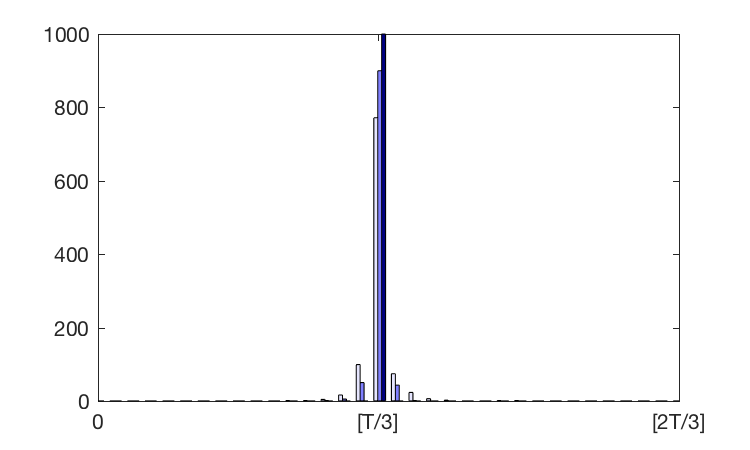}\\
\end{minipage}
\begin{minipage}[b]{0.49\linewidth}
\centering
\scriptsize{$T=20$, change point at $T=2$}\\ \vspace{-0.3mm} 
\includegraphics[scale=0.6]{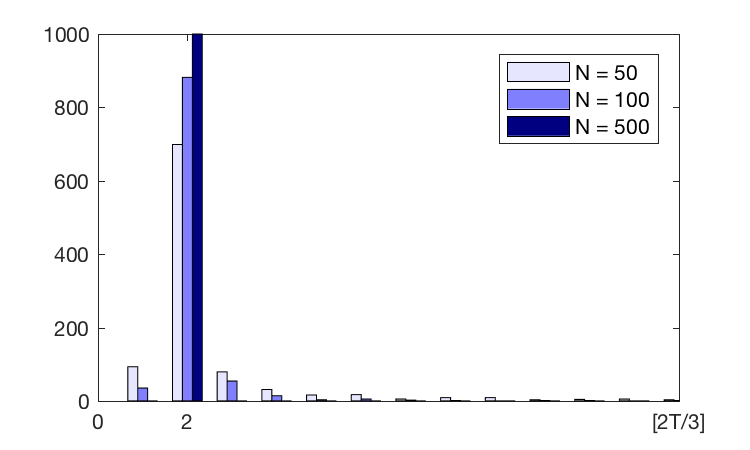}\\
\scriptsize{$T=30$, change point at $T=2$}\\ \vspace{-0.3mm} 
\includegraphics[scale=0.6]{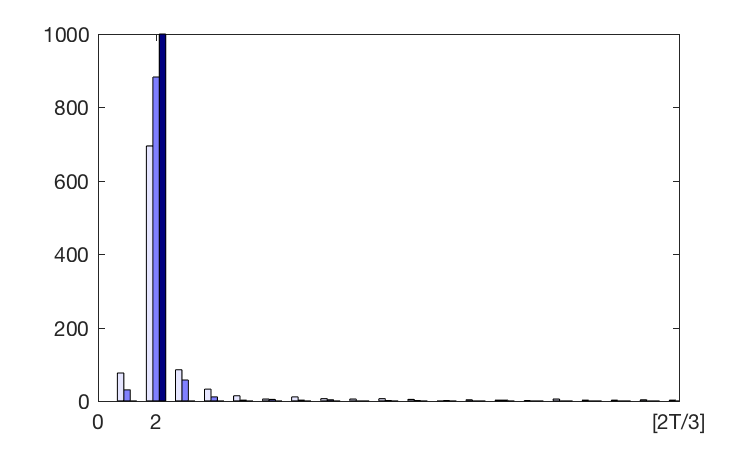}\\
\scriptsize{$T=50$, change point at $T=2$}\\ \vspace{-0.3mm} 
\includegraphics[scale=0.6]{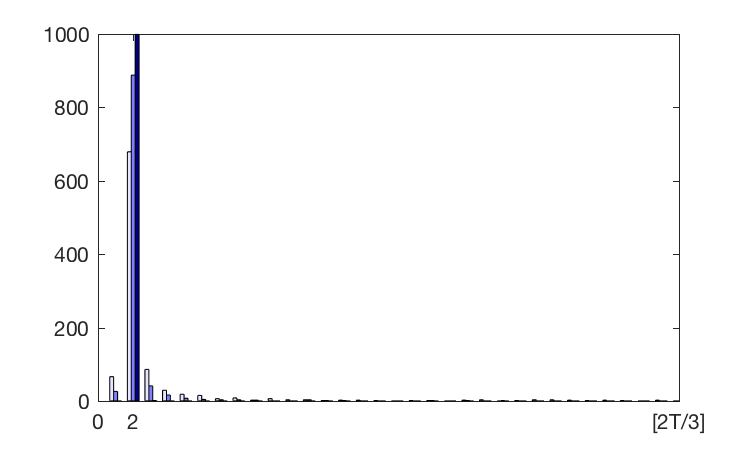}\\
\end{minipage}
\end{figure}

\begin{figure}[h]
\caption{Histogram of estimated locations for two change points}
\label{fig:case2}
\centering
\includegraphics[scale=0.6]{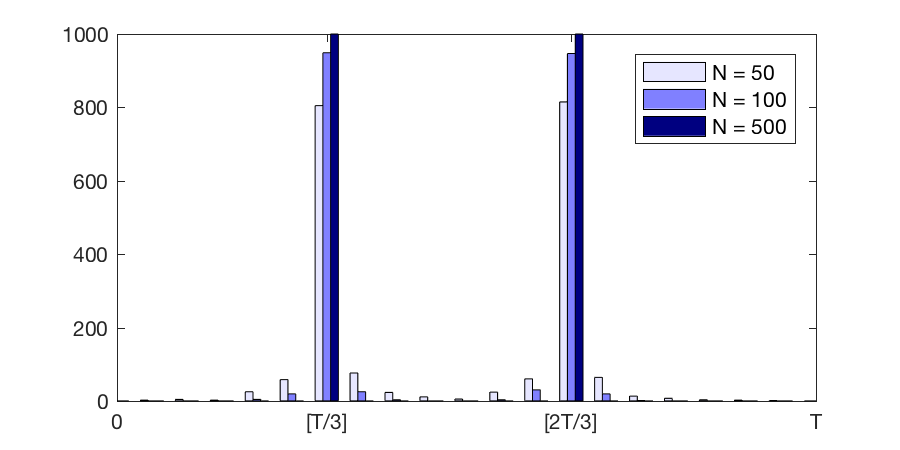}\\
\scriptsize{$T=20$; change points at [T/3] and [2T/3]}\\
\end{figure}

\black{Figures \ref{fig:case1}-\ref{fig:case2} report histograms of the estimated change point locations, assuming that $m^0$ is known. Similarly, Tables \ref{tab:case1}-\ref{tab:case2} report slope estimators based on the true sample partition of the respective DGP. Since for $T^0_{1}=[T/3]$ the change point location increases proportionally with $T\in\{20,30,50\}$, the number of time periods before and after the change point are more balanced compared to the case with $T^0_{1}=2$. As a consequence, in \black{the left panels of Figure \ref{fig:case1}}, the distribution of estimates is centered at the true change point $[T/3]$, while \black{the right panels show a distribution that is skewed to the right for small $N$}. As $N$ increases, the distribution collapses over the true change point for both choices of $T^0_{1}$. For two change points, Figure \ref{fig:case2} shows that the estimated locations are also increasingly accurate as $N$ grows, as is expected from the consistency result of Theorem \ref{theo1}.}

\FloatBarrier
\black{For the same DGPs, Tables \ref{tab:case1} and \ref{tab:case2} report bias, standard error and mean squared error (MSE) of slope estimates based on OLS, FE and FFE estimates averaged over 1000 repetitions. Standard errors are calculated based on Theorem \ref{theo3}.\footnote{\black{Note that standard errors of the OLS estimators cannot be computed due to not observing the individual effects.}} Due to the dependence of $x_{it}$ on $c_i$, OLS estimators exhibit, as expected, a strong bias. Overall, FE and FFE estimators perform well even in small samples ($N=50$; $T=20$) with average bias close to zero and small standard errors. For both choices of change point locations, FFE estimators have smaller standard errors compared to FE, as expected from Theorem \ref{theo3}.}

\begin{table}[htp!]
        \scriptsize
        \caption{Properties of slope estimators for a single change point at $[T/3] $}\label{tab:case1}
        \centering
        \begin{tabular}{|c|c|c|ccc|ccc|}                                                                                                 
                \hline                                                  &                                               &                         & \multicolumn{3}{c}{$ \hat \beta_1 $} \vline & \multicolumn{3}{c}{$ \hat \beta_2 $} \vline  \\
                &                                               &                       & Bias & SE & MSE & Bias & SE & MSE  \\ \hline
                \multirow{9}{*}{$N=50$} & \multirow{3}{*}{$T=20$} & OLS         & 0.352 & -                       & -             & 0.352 & -                     & -       \\                                                                                                                                                                                                      
                \cline{3-9}                                     &                                               & FE              & 0.001 & 0.045         & 0.002         & 0.001 & 0.028         & 0.001  \\ 
                \cline{3-9}                                     &                                               & FFE     & 0.000 & 0.042         & 0.002         & 0.001 & 0.027         & 0.001    \\  
                \cline{2-9}                                     & \multirow{3}{*}{$T=30$} & OLS   & 0.349         & -             & -             & 0.349         & -               & -              \\                                                    
                \cline{3-9}                                     &                                               & FE              & 0.000         & 0.033 & 0.001         & -0.001 & 0.023 & 0.001          \\   
                \cline{3-9}                                     &                                               & FFE     & -0.001        & 0.032 & 0.001         & -0.001 & 0.023 & 0.001          \\ 
                \cline{2-9}                                     & \multirow{3}{*}{$T=50$} & OLS   & 0.349         & -             & -             & 0.349 & -                     & -                \\                                                                                                                                                                                                      
                \cline{3-9}                                     &                                               & FE              & 0.000         & 0.026 & 0.001 & -0.000 & 0.017        & 0.000  \\                                                                                                                                                     
                \cline{3-9}                                     &                                               & FFE     & -0.001 & 0.025 & 0.001 & -0.000 & 0.017       & 0.000  \\ \hline
                \multirow{9}{*}{$N=100$}        & \multirow{3}{*}{$T=20$} & OLS   & 0.352  & -            & -             & 0.352         & -             & -       \\
                \cline{3-9}                                     &                                               & FE              & -0.000 & 0.032        & 0.001         & -0.000 & 0.020         & 0.000         \\
                \cline{3-9}                                     &                                               & FFE     & -0.000 & 0.030        & 0.001         & 0.000         & 0.019         & 0.000   \\ 
                \cline{2-9}                                     & \multirow{3}{*}{$T=30$} & OLS   & 0.351 & -                     & -             & 0.352 & -                     & -               \\ 
                \cline{3-9}                                     &                                               & FE              & 0.000 & 0.024         & 0.001 & 0.000 & 0.016         & 0.000 \\
                \cline{3-9}                                     &                                               & FFE     & -0.000 & 0.023 & 0.001 & 0.000 & 0.016        & 0.000 \\ 
                \cline{2-9}                                     & \multirow{3}{*}{$T=50$} & OLS   & 0.352 & -                     & -             & 0.352 & -                     & -               \\
                \cline{3-9}                                     &                                               & FE              & -0.001 & 0.018 & 0.000        & 0.000 & 0.012         & 0.000 \\
                \cline{3-9}                                     &                                               & FFE     & -0.000 & 0.018 & 0.000        & 0.000 & 0.012         & 0.000 \\ \hline
                \multirow{9}{*}{$N=500$} & \multirow{3}{*}{$T=20$} & OLS         & 0.353         & -             & -             & 0.353 & -                     & -       \\
                \cline{3-9}                                     &                                               & FE              & -0.000 & 0.014        & 0.000         & 0.000 & 0.009         & 0.000 \\
                \cline{3-9}                                     &                                               & FFE     & 0.000         & 0.013         & 0.000         & 0.000 & 0.009         & 0.000 \\ 
                \cline{2-9}                                     & \multirow{3}{*}{$T=30$} & OLS   & 0.353 & -                     & -             & 0.353 & -                     & -                       \\
                \cline{3-9}                                     &                                               & FE              & -0.000 & 0.011        & 0.000 & 0.000 & 0.007         & 0.000 \\
                \cline{3-9}                                     &                                               & FFE     & 0.000 & 0.010         & 0.000 & 0.000 & 0.007         & 0.000 \\ 
                \cline{2-9}                                     & \multirow{3}{*}{$T=50$} & OLS   & 0.353 & -                     & -             & 0.353 & -                     & -                       \\
                \cline{3-9}                                     &                                               & FE              & 0.000 & 0.008         & 0.000 & 0.000 & 0.006         & 0.000 \\
                \cline{3-9}                                     &                                               & FFE     & 0.000 & 0.008         & 0.000 & 0.000 & 0.005         & 0.000 \\ \hline
        \end{tabular}
\end{table}
\begin{table}[htp!]
        \scriptsize
        \caption{Properties of slope estimators for one change point at $2 $}\label{tab:case2}
        \centering
        \begin{tabular}{|c|c|c|ccc|ccc|}                                                                                                 
                \hline                                                  &                                               &                         & \multicolumn{3}{c}{$ \hat \beta_1 $} \vline & \multicolumn{3}{c}{$ \hat \beta_2 $} \vline  \\
                &                                               &                       & Bias & SE & MSE & Bias & SE & MSE  \\ \hline
                \multirow{9}{*}{$N=50$} & \multirow{3}{*}{$T=20$} & OLS         & 0.352 & -                       & -             & 0.352 & -             & -               \\                                                                                                                                                                                                    
                \cline{3-9}                                     &                                                                               & FE              & 0.002 & 0.100                 & 0.010 & 0.001 & 0.024         & 0.001 \\                                                                                                                                                   
                \cline{3-9}                                     &                                                                               & FFE     & -0.001 & 0.073        & 0.005 & 0.001 & 0.024 & 0.001          \\
                \cline{2-9}                                     & \multirow{3}{*}{$T=30$} & OLS           & 0.352 & -                     & -             & 0.349 & -               & -             \\                                                                                                                                                                                                                                              
                \cline{3-9}                                     &                                                                               & FE              & 0.001 & 0.100                 & 0.010 & -0.001 & 0.019 & 0.000 \\                                                                                                                                                    
                \cline{3-9}                                     &                                                                               & FFE     & -0.000 & 0.072        & 0.005 & -0.001 & 0.019 & 0.000 \\
                \cline{2-9}                                     & \multirow{3}{*}{$T=50$} & OLS           & 0.349 & -                     & -             & 0.349 & -               & -             \\                                                                                                                                                                                                        
                \cline{3-9}                                     &                                                                               & FE              & 0.003 & 0.100                 & 0.010 & -0.000 & 0.015 & 0.000  \\                                                                                                                                                                                                                                                                                          
                \cline{3-9}                                     &                                                                               & FFE     & -0.000 & 0.071        & 0.005 & -0.000 & 0.015 & 0.000 \\ \hline
                \multirow{9}{*}{$N=100$}        & \multirow{3}{*}{$T=20$} & OLS   &  0.350 & -            & -             & 0.352 & -             & -               \\
                \cline{3-9}                                     &                                                                               & FE              &  -0.000       & 0.071 & 0.005         & 0.000 & 0.017         & 0.000 \\
                \cline{3-9}                                     &                                                                               & FFE     &  -0.003       & 0.051 & 0.003 & 0.000 & 0.017 & 0.000  \\
                \cline{2-9}                                     & \multirow{3}{*}{$T=30$} & OLS           &       0.351   & -             & -             & 0.352 & -               & -             \\ 
                \cline{3-9}                                     &                                                                               & FE              &       -0.001  & 0.071 & 0.005 & 0.000 & 0.014 & 0.000  \\
                \cline{3-9}                                     &                                                                               & FFE     &       -0.000  & 0.051 & 0.003         & 0.000 & 0.014 & 0.000  \\ 
                \cline{2-9}                                     & \multirow{3}{*}{$T=50$} & OLS           &       0.354   & -             & -             & 0.352 & -               & -             \\
                \cline{3-9}                                     &                                                                               & FE              &       0.003   & 0.071 & 0.005 & -0.000 & 0.010 & 0.000  \\
                \cline{3-9}                                     &                                                                               & FFE     &  0.002        & 0.051 & 0.003 & -0.000 & 0.010 & 0.000  \\ \hline
                \multirow{9}{*}{$N=500$} & \multirow{3}{*}{$T=20$} & OLS &  0.353 & -            & -             & 0.353 & -             & -             \\
                \cline{3-9}                                     &                                                                               & FE              &  -0.001 & 0.032 & 0.001 & 0.000 & 0.008 & 0.000  \\
                \cline{3-9}                                     &                                                                               & FFE     &       0.000 & 0.023   & 0.001 & 0.000 & 0.008 & 0.000 \\  
                \cline{2-9}                                     & \multirow{3}{*}{$T=30$} & OLS           &  0.353 & -            & -             & 0.353 & -             & -               \\
                \cline{3-9}                                     &                                                                               & FE              &  -0.001 & 0.032 & 0.001 & 0.000 & 0.006 & 0.000 \\
                \cline{3-9}                                     &                                                                               & FFE     &       0.000 & 0.023 & 0.001 & 0.000 & 0.006 & 0.000 \\ 
                \cline{2-9}                                     & \multirow{3}{*}{$T=50$} & OLS           &       0.352 & -               & -             & 0.353 & -               & -     \\
                \cline{3-9}                                     &                                                                               & FE              &       -0.002 & 0.032 & 0.001 & 0.000 & 0.005 & 0.000 \\
                \cline{3-9}                                     &                                                                               & FFE     &       -0.000 & 0.023 & 0.001 & 0.000 & 0.005 & 0.000 \\ \hline
        \end{tabular}
\end{table}

\FloatBarrier

\begin{figure}[htp!]
\caption{Estimated number of change points for $T=20$}
\label{fig:case3}
\vspace{3mm} 
\normalsize{HQIC} \\ 
\begin{minipage}[b]{0.33\linewidth}
\centering
\vspace{2mm} \scriptsize{No change point}\\ \vspace{-0.3mm} 
\includegraphics[scale=0.5]{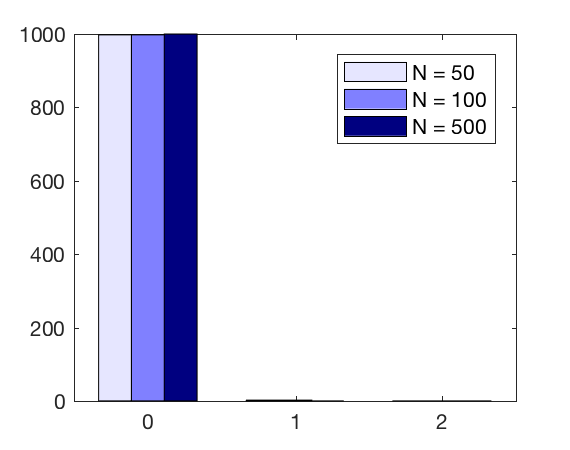}\\
\end{minipage}
\begin{minipage}[b]{0.33\linewidth}
\centering
\scriptsize{One change point at [T/3]}\\ \vspace{-0.3mm} 
\includegraphics[scale=0.5]{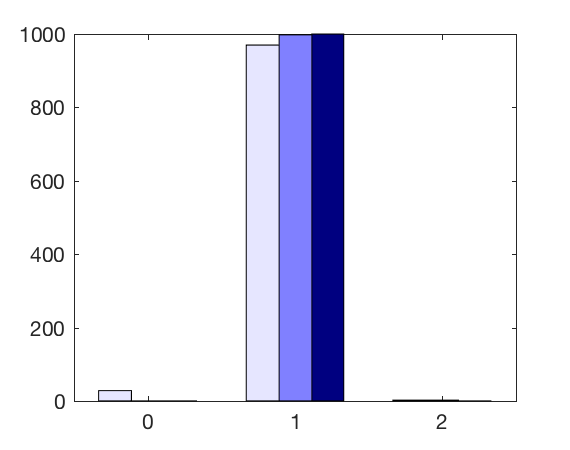}\\
\end{minipage}
\begin{minipage}[b]{0.33\linewidth}
\centering
\scriptsize{Two change points at [T/3] and [2T/3]}\\ \vspace{-0.3mm} 
\includegraphics[scale=0.5]{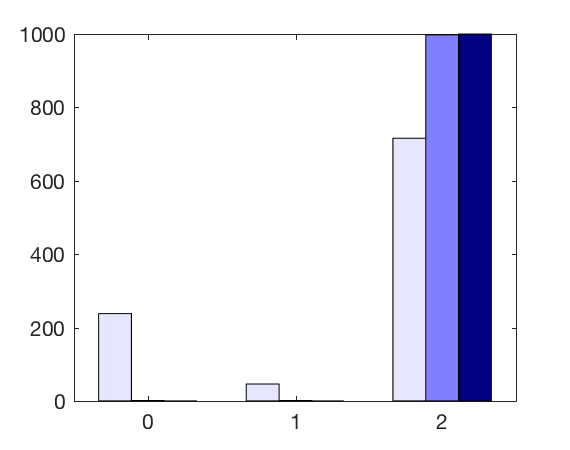}\\
\end{minipage}

\vspace{3mm} 
\normalsize{BIC} \\ 
\begin{minipage}[b]{0.33\linewidth}
\centering
\vspace{2mm} \scriptsize{No change point}\\ \vspace{-0.3mm} 
\includegraphics[scale=0.5]{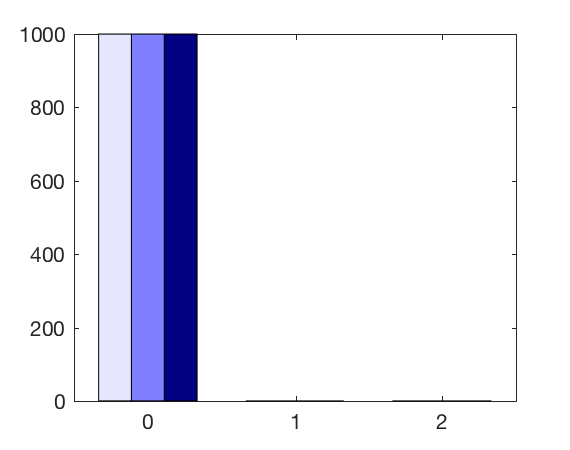}\\
\end{minipage}
\begin{minipage}[b]{0.33\linewidth}
\centering
\scriptsize{One change point at [T/3]}\\ \vspace{-0.3mm} 
\includegraphics[scale=0.5]{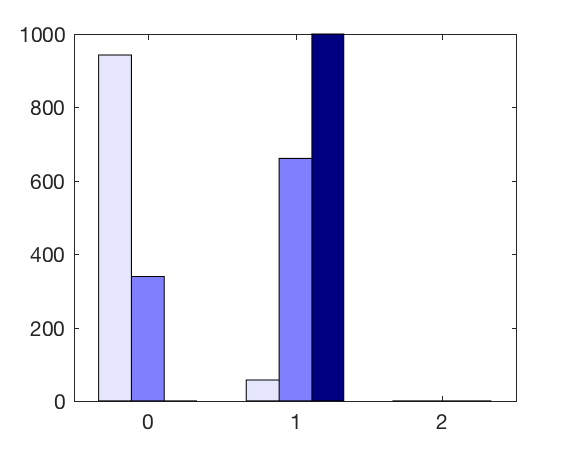}\\
\end{minipage}
\begin{minipage}[b]{0.33\linewidth}
\centering
\scriptsize{Two change points at [T/3] and [2T/3]}\\ \vspace{-0.3mm} 
\includegraphics[scale=0.5]{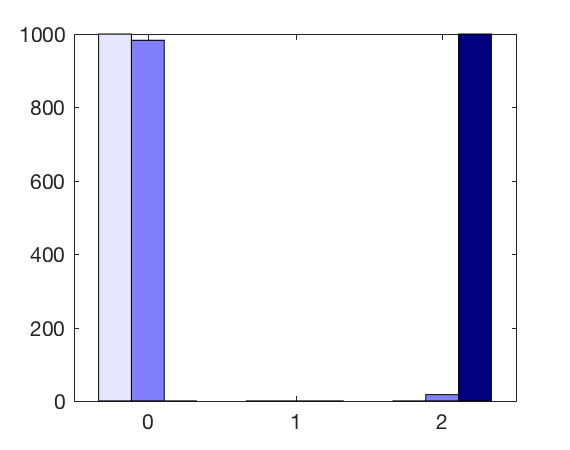}\\
\end{minipage}
\end{figure} 

Figure \ref{fig:case3} shows the estimated number of change points for the HQIC and BIC criteria defined in Section 2, and DGPs with $0,1$ or $2$ change points. \black{ Both HQIC and BIC perform well in large samples, but in smaller samples, the BIC strongly underestimates the number of change points}. For this reason, we use HQIC in both applications of Section \ref{sec:app}. 

\begin{figure}[htp!]
        \caption{Estimated number of change points for $T = 20$; 2 change points (at [T/3] and [2T/3])}\label{fig:case4}
        \vspace{3mm} 
        \normalsize{HQIC} \\ 
        \begin{minipage}[b]{0.33\linewidth}
                \centering
                \vspace{2mm} \scriptsize{$w = 0~$}\\ \vspace{-0.3mm} 
                \includegraphics[scale=0.6]{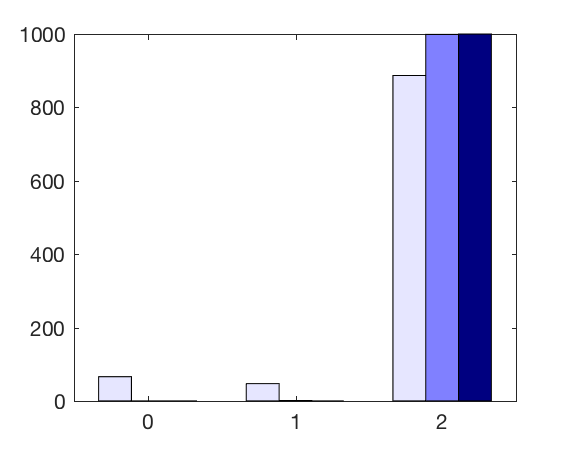}\\
        \end{minipage}
        \begin{minipage}[b]{0.33\linewidth}
                \centering
                \vspace{2mm} \scriptsize{$w = 0.1$}\\ \vspace{-0.3mm} 
                \includegraphics[scale=0.6]{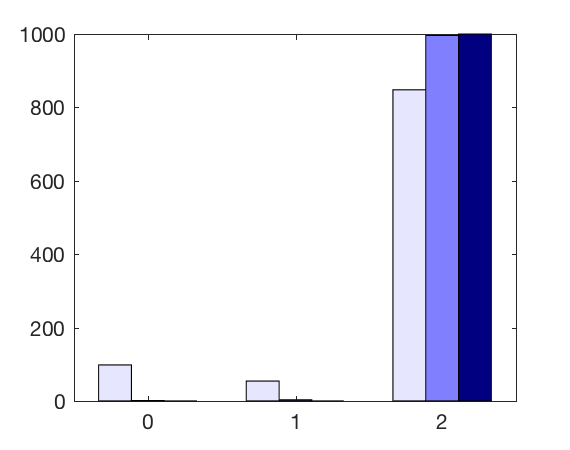}\\
        \end{minipage}
        \begin{minipage}[b]{0.33\linewidth}
                \centering
                \vspace{2mm} \scriptsize{$w = 0.3$}\\ \vspace{-0.3mm} 
                \includegraphics[scale=0.6]{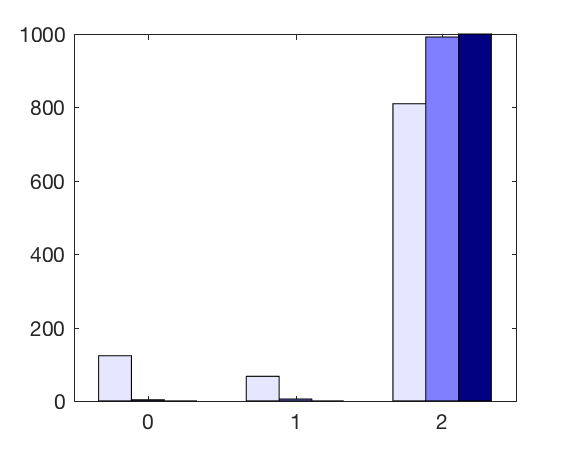}\\
        \end{minipage} \vspace{-1mm} \\ 
        \normalsize{AGFL (Qian and Su, 2016)} \\ 
        \begin{minipage}[b]{0.33\linewidth}
                \centering
                \vspace{2mm} \scriptsize{$w = 0~$}\\ \vspace{-0.3mm} 
                \includegraphics[scale=0.6]{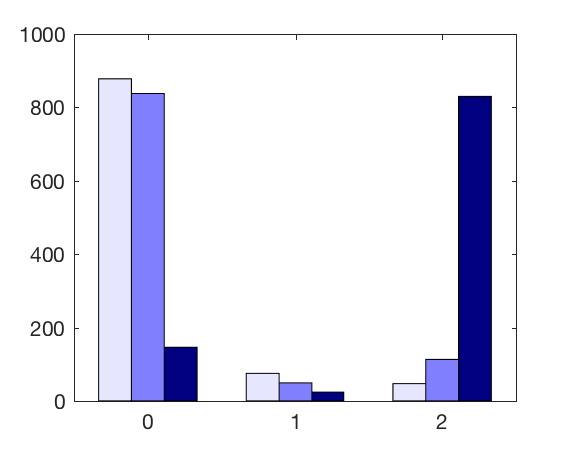}\\
        \end{minipage}
        \begin{minipage}[b]{0.33\linewidth}
                \centering
                \vspace{2mm} \scriptsize{$w = 0.1$}\\ \vspace{-0.3mm} 
                \includegraphics[scale=0.6]{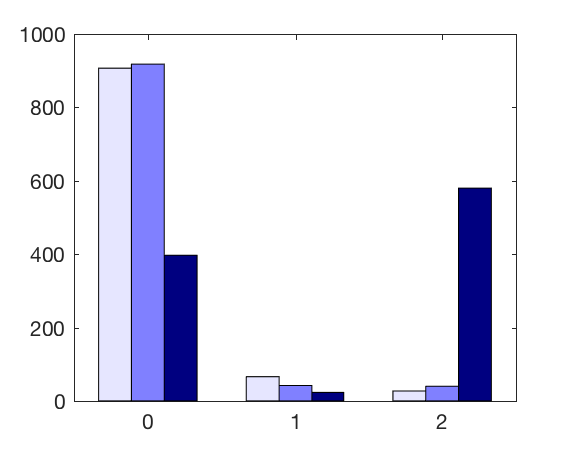}\\
        \end{minipage}
        \begin{minipage}[b]{0.33\linewidth}
                \centering
                \vspace{2mm} \scriptsize{$w = 0.3$}\\ \vspace{-0.3mm} 
                \includegraphics[scale=0.6]{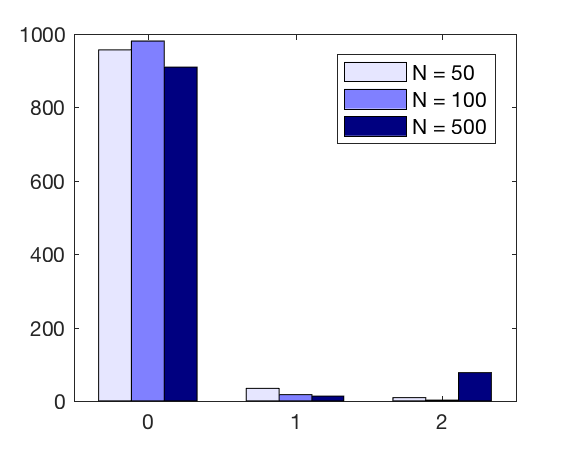}\\
        \end{minipage}
\end{figure} 

\begin{table}[htp!]                                                                                                                     
        \centering       
        \small 
        \resizebox{\columnwidth}{!}{                                                                                                                    
                \begin{tabular}{|c|ccc|ccc|ccc|}    
                        \hline &  \multicolumn{3}{c}{ AGFL } \vline &   \multicolumn{3}{c}{ FE } \vline   &   \multicolumn{3}{c}{ FFE } \vline   \\                                                                                                                                  
                        & $\hat{\beta_1}$ & $\hat{\beta_2}$ & $\hat{\beta_3}$ & $\hat{\beta_1}$ & $\hat{\beta_2}$ & $\hat{\beta_3}$ & $\hat{\beta_1}$ & $\hat{\beta_2}$ & $\hat{\beta_3}$ \\                                   
                        \hline                                                                                                                                
                        $w=0$           & -0.103        & 0.103          & -0.101       & -0.101                 & 0.100         & -0.100                &  -0.101       & 0.100   & -0.100 \\                                                                                                                                             
                        & (0.124) & (0.116) & (0.119)   & (0.013)       & (0.012) & (0.012)       &  (0.011)      & (0.010) & (0.010) \\  
                        \hline                                                                                                                           
                        $w=0.1$  & -0.106       & 0.108                 & -0.103                & -0.101          & 0.100                 & -0.100                & -0.101                 & 0.100         & -0.100 \\                                                                                                                                          
                        & (0.130)       & (0.121)       & (0.125)       & (0.015)         & (0.014)       & (0.014)       &  (0.011)      & (0.011)       & (0.011) \\
                        \hline                                                                                                                           
                        $w=0.3$         & -0.113                & 0.126                 & -0.112                 & -0.101                & 0.100                 & -0.100                 &  -0.101       & 0.100         & -0.100 \\                                                                                                                                            
                        & (0.144)       & (0.134)       & (0.140)       & (0.019)         & (0.017)       & (0.017)       &  (0.014)      & (0.013)       & (0.013) \\
                        \hline 
        \end{tabular} }                                                                                                             
        \caption{Post-AGFL estimates, FE and FFE Estimates for DGP's with N=500, T=20 and change points at [T/2] and [2T/3].}     
        \label{tab:case_AGFL}                                                                                                                                      
\end{table} 
\black{In Figure \ref{fig:case4}, we compare the finite sample performance of the AGFL estimator of the number of change points in Qian and Su (2016) with our method for the DGP described above with two change points. To show how the finite sample performance of AGFL and our method depends on the degree of cross-sectional variation, we change the way the regressor is generated to $x_{it} = \sqrt{2}c_i + e_{it}$, with $e_{it} = w g_i + (1-w)\epsilon_{it}$, where $g_i \sim iid~ N(0,1/2)$ and $\epsilon_{it} \sim iid~ N(0,1/2)$. Note that, as before $c_i$ introduces endogeneity in $x_{it}$, while the new variable $g_i$ adds additional exogenous cross-sectional variation to $x_{it}$. The case $w=0$ reflects the usual DGP used throughout this section, which corresponds to $\approx 5 \%$ cross-sectional variation in $e_{it}$, while $w=0.1$ to $\approx 6.5 \%$ and $w=0.3$ to $\approx 20 \%$.\footnote{The cross-sectional variation is the $R$-squared from a regression of $e_{it}$ on the full set of individual dummies.} Figure \ref{fig:case4} shows that even in large samples ($N=500$), the AGFL is very sensitive to this moderate increase in cross-sectional variation, and tends to severely underestimate the number of change points, while our method remains mostly unaffected. Our method is therefore a useful alternative to AGFL for estimating multiple change points in short panels, where most of the variation in the data is in the cross-section dimension.} 

\black{Table \ref{tab:case_AGFL} contrasts the corresponding slope estimates for the two methods, and only for the cases that the two change points at $[T/3]$ and $[2T/3]$ are estimated correctly. The post AGFL slope estimators have higher bias and higher variance, although this could be due to less cases available for simulation averaging. Among the two estimators we propose, we see that as shown in Theorem \ref{theo3}, the FFE estimator is more efficient. }

Since our two applications in the next section have sample sizes $\{T,N\}=\{19,106\}$ and $\{T,N\}=\{18,216\}$ and the second application has 15 regressors, we ran simulations of panels with similar properties. The results in Figures \ref{fig:case5} and \ref{fig:case6} report the estimated number of changes and the estimated location of the change point (when the number of changes is correctly estimated) for a change \black{in each parameter} that varies from 0.01 to 0.02. Our method is able to detect a single change point of moderate size (above $0.015$) at the correct location ($T_1^0=14$) for most of the simulations.

\begin{figure}[htp!]
\begin{minipage}[b]{0.47\linewidth}
\centering
\caption{Estimated number of change points}\label{fig:case5}
\includegraphics[scale=0.6]{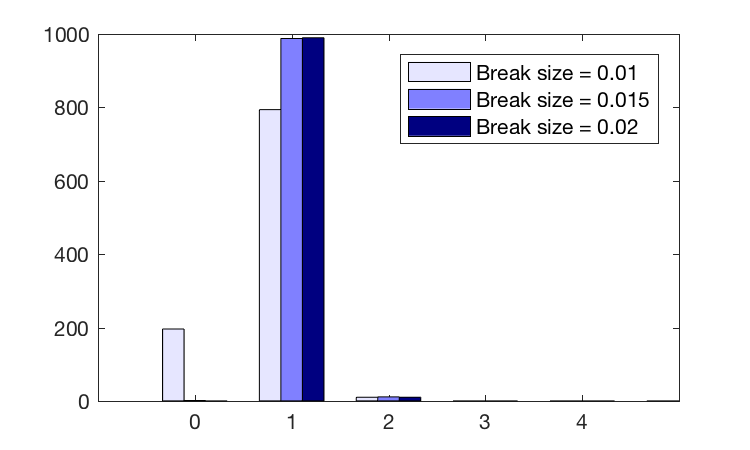}\\
\scriptsize{HQIC with $T=20$, \\ $N=200$, 15 regressors, change point at $T_1^0=14$} \\
\end{minipage}
\begin{minipage}[b]{0.53\linewidth}
\centering
\caption{Estimated location of change points}
\label{fig:case6}
\includegraphics[scale=0.6]{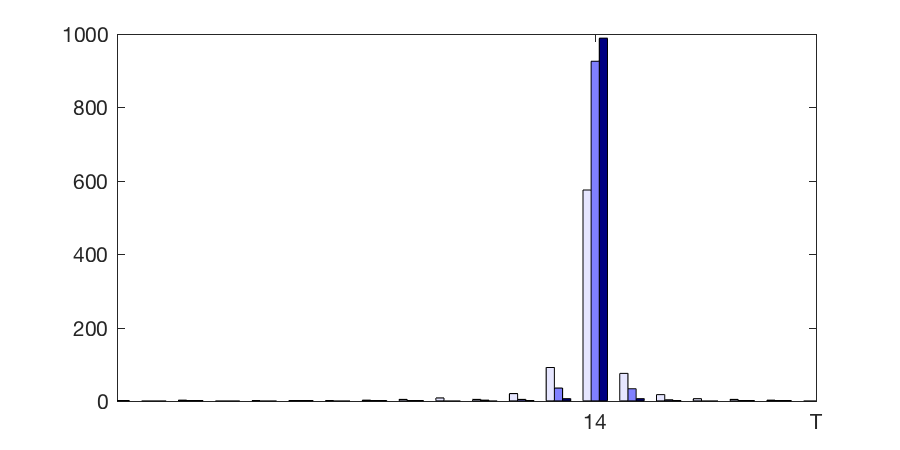}\\
\scriptsize{OLS change point estimator with $T=20$, \\ $N=200$, 15 regressors, change point at $T_1^0=14$} \\
\end{minipage}
\end{figure}

\section{Two Applications}
\label{sec:app}

\subsection{Environmental Kuznets Curve}

The environmental Kuznets curve (EKC) is often used to capture the relationship between income of a country and its emissions of chemicals such as carbon dioxide (CO$_2$). To detect changes in this relationship and in the emissions due to climate accords, we use yearly panel data on 106 countries and 19 years.\footnote{We use part of the dataset from Li, Qian and Su (2016) based on the World Bank Development Indicators (\href{https://data.worldbank.org/products/wdi}{https://data.worldbank.org/products/wdi}).} Countries with population less than five million and countries with missing observations are not used in our analysis. We start in 1992, because that is the year of the UN Framework Convention on Climate Change (UNFCCC), the first large international step to acknowledge climate change  and to attempt to reduce emissions.\footnote{Before this period, several former communist countries would have to be excluded, leading to severe mismeasurement of emissions.}
\begin{equation}
\text{Emissions}_{it} = \beta_{1j} + \beta_{2j} \text{GDP}_{it} +  \beta_{3j} \text{GDP}_{it}^2 +  \beta_{4j} \text{Energy}_{it} + c_{ij} + \varepsilon_{it}
\end{equation}
Here, Emissions$_{it}$ is the logarithm of per capita CO$_2$ emissions in metric tones for country $i$ in year $t$, GDP$_{it}$ represents the logarithm of real gross domestic product in 2000 USD  and Energy$_{it}$ is the logarithm of per capita consumption of energy measured in kilogram of oil equivalent. Energy consumption is included in several applications of the EKC with panel data (see Apergis and Payne 2009, Lean and Smyth 2010, Arouri et al 2012 and Farhani et al 2014). The term $c_{ij}$ reflects unobserved country-specific characteristics affecting CO$_2$ emissions such as its geography, resources, political developments, influence of environmental groups and industry composition, which are all likely to be correlated with income and/or energy consumption. 

\begin{table}[h!]                                                                 
\centering            
\footnotesize                                             
\begin{tabular}{|c|ccccccc|c|}    
\hline
  & \multicolumn{7}{c|}{\textbf{Three Change points: 1997, 2004 and 2007}} & \textbf{No Change}                                                  \\ 
                                & $\hat\beta_1$ & $\hat\beta_2$ & $\hat\beta_3$  & $\hat\beta_4$ &  $\hat\beta_2$-$\hat\beta_1$ & $\hat\beta_3$-$\hat\beta_2$ & $\hat\beta_4$-$\hat\beta_3$ &          $\hat\beta$                            \\              \hline
GDP                     & 0.719***      & 0.039                 & 0.256                 & 0.101           & -0.680*** & 0.217             & -0.156                & 0.585***        \\
                                & (0.111)       & (0.105)       & (0.226)         & (0.209)       & (0.153)       & (0.249)       & (0.308)       & (0.041)         \\
GDP$^2$         & -0.046*** & 0.001             & -0.012                & -0.005          & 0.047***      & -0.013        & 0.007         & -0.035*** \\
                                & (0.008)       & (0.007)       & (0.015)         & (0.013)       & (0.010)       & (0.016)       & (0.020)               & (0.003)         \\
Energy          & 1.271***      & 0.712***      & 0.525***      & 0.981***         & -0.560*** & -0.187*   & 0.456***      & 0.969***      \\
                                & (0.060)       & (0.060)       & (0.126)         & (0.090)       & (0.085)       & (0.140)       & (0.155)               & (0.028)         \\      
                                &                                       &                                         &                                       &                                         &                                       &                                         &                                       &                                         \\ 
Wald Test &                                     &                               &                                 &                               & 73.665*** & 2.493                 & 9.063**       &                               \\      \hline     
N $\times$ T & \multicolumn{8}{c|}{2014}                                                                                                                                                                                                                        \\
N             & \multicolumn{8}{c|}{106}                                                                                                                                                                                                                                        \\      \hline                     
\end{tabular}
\caption{FE estimates of slope coefficients based on yearly data on 106 countries from 1992 to 2010. Three change points are found by  HQIC, at 1997, 2004 and 2007. The last column shows results of a standard fixed effects panel regression where the presence of change points is ignored. Standard errors are reported in brackets and account for autocorrelation and conditional heteroskedasticity (***, **, and * indicate significance at the $0.01$, $0.05$, and $0.1$ level).}                     
\label{tab:results1}                                                        
\end{table}  

Our method finds three change points in 1997, 2004 and 2007.\footnote{Li et al. (2016) study the environmental Kuznets Curve using a similar dataset with an interactive fixed effects specification. Since our estimator only finds evidence for three change points between 1992 and 2010, an interactive fixed effects specification, i.e. $c_{it}=c_i f_t$, may not be desirable, unless it can be argued that despite this specification, there are only three changes in the pseudo-true parameters $\vgam_j^0$.} All these changes can be traced back to steps in the Kyoto Protocol. At the beginning of our sample, the UNFCCC was adopted by 154 countries with the long-term aim of reducing global greenhouse gas emissions. The first major step in this convention was the Kyoto Protocol, adopted by consensus with more than 150 signatories on December 11, 1997. The Protocol included legally binding emissions targets for developed country parties for the six major greenhouse gases (including carbon dioxide). In 2004, Russia and Canada are the last to ratify the Kyoto Protocol, bringing the treaty into effect. In January 2008, the joint implementation mechanism starts. 

In Table \ref{tab:results1} columns 1-4 report the corresponding FE estimates of slope coefficients and columns 5-7 their changes from one segment to the next.\footnote{Table \ref{tab:results1} also reports the Wald test of the $H_0$ hypothesis $\beta_j = \beta_{j-1}$. Moreover, as explained in Section \ref{sec:fe}, the alternative slope estimator (FFE) relies on the assumption $c_{ij}=c_i$. Since several unobserved country characteristics such as the industry composition are likely to vary over segments $j$, and common shocks may hit countries at the same time in an unobserved way, as supported by the interactive fixed effects specification in Li, Qian and Su (2016), the FE estimator is the preferred choice over FFE in this application.} Columns 5-7 of Table \ref{tab:results1} reveal that the three changes are largely driven by changes in the coefficient of energy consumption, which decreases over the second and third segment, then increases back in the last. This could indicate that the Kyoto Protocol was initially successful in decreasing the elasticity of CO$_2$ emissions with respect to energy consumption. Based on estimates in the first sample segment (1992-1997), a 1$\%$ increase in per capita energy consumed leads to a 1.271$\%$ increase in CO$_2$ emissions per capita. The elasticity had decreased to 0.525$\%$ by the third segment (2005-2007). This decrease was followed by a significantly large increase of the elasticity to 0.981$\%$ in the last segment (2008-2010), reversing the decrease over the past 15 years to a large extent.  

In summary, although the Kyoto Protocol is known to have no noticeable impact on global levels of carbon emissions (see, e.g. Helm, 2012), we find that in the course of its implementation, the elasticity of CO$_2$ emissions per capita with respect to energy consumption per capita underwent significant changes.

\subsection{House Price Expectations in the U.S. after the Financial Crisis}

We use data from a quarterly survey on 216 U.S. households to study house price expectations in the aftermath of the subprime mortgage crisis (2009-2013).\footnote{Our dataset is taken from the RAND American Life Panel (ALP), the Office of Federal Housing Enterprise Oversight (\href{http://www.fhfa.gov}{http://www.fhfa.gov}) and the Bureau of Labor Statistics (\href{http://www.bls.gov}{http://www.bls.gov}). We thank G. Niu and A. van Soest who kindly shared the data used in Niu and van Soest (2014), from which we extracted a balanced panel.} Every three months home owners stated their beliefs about the percentage chance that the value of their home will increase by the next year (0-100). We regress these expectations on household characteristics and state-level economic indicators.\footnote{A detailed description of the 15 covariates used can be found in Appendix B.}

The influence of unobserved characteristics of home owner $i$ such as optimism are captured by $c_{ij}$ in model \ref{eq:model}. These unobserved characteristics are likely to be correlated with some of the regressors. For instance, a home owner with an optimistic personality will have a more positive outlook on the price of his house but also on his subjective economic and financial well-being, which is one of the regressors of interest ('economic sentiment'). 

Our method finds a single change point in 2012Q2. In columns 2-7 of Table \ref{tab:results2} we report estimates of slope coefficients for both the FE and the FFE\ estimators. However, as in the first application, the two coefficient estimates exhibit large differences, leading us to conclude that the assumption of fixed effects might be violated in this setting as well. Therefore, we focus on the FE estimates.

Column 7 of Table \ref{tab:results2} shows the difference in estimated FE coefficients between the second and first segment ($\hat \beta_2 - \hat \beta_1$).  Evidently, the change point is primarily driven by differences in coefficients of the variables 'Change in local house prices', being female, the indicator for living in Arizona, California, Florida and Nevada ('Sand state'), and the health of the home owner.\footnote{The Wald test (Table \ref{tab:results2}) of the $H_0$ hypothesis $\beta_1 = \beta_2$ confirms that there is a significant change in the vector of slope coefficients across the two segments for both FFE and FE estimates.} Interestingly, two of these regressors do not vary over time ('Female' and 'Sand state'), yet their coefficient changes.

The period before the change point (2009 - 2012Q2) represents the direct aftermath of the financial crisis when economic uncertainty was high. In this segment we find a significant positive effect of home owner's subjective economic and financial well-being ('economic sentiment') on house price expectations. In 2012Q3 the Federal Reserve announced its third round of quantitative easing, which was an open-ended bond purchasing program of agency mortgage-backed securities and it was announced that the federal funds rate would be likely maintained near zero for at least the next three years. Overall, uncertainty in the market decreased and  the steady recovery of the U.S. housing market began. Our results in Table \ref{tab:results2} suggest that during this recovery period home owners looked for more objective measures of economic performance (such as changes in state-level unemployment rates and state-level house prices) to infer their house values, while previously they relied more on subjective assessments.

\begin{table}[h!]                                                                           
\footnotesize                                                              
\begin{tabular}{|r|c|ccc|ccc|}    
\hline 
 & \textbf{No Change}    & \multicolumn{6}{c|}{\textbf{Single Change at 2012 Q2}} \\ \hline 
 &                                                              & \multicolumn{3}{c|}{} & \multicolumn{3}{c|}{} \\ 
Dependent variable: 1-year & Standard FE & \multicolumn{3}{c|}{Full-sample Fixed Effects (FFE)} & \multicolumn{3}{c|}{sub-sample Fixed Effects (FE)} \\                                                                                              
house price expectations $~~$ & $\hat\beta$ & $\hat\beta_1$ & $\hat\beta_2$ & $\hat{\beta_2} - \hat{\beta_1}$ & $\hat\beta_1$ & $\hat\beta_2$ & $\hat{\beta_2} - \hat{\beta_1}$  \\   \hline   
Constant & - & - & - & 15.158  & - & - & - \\                                                        
  &   &   &   & (13.329) &   &   &   \\                                                              
Change in unemployment & -0.080*** & -0.049 & -0.175** & -0.126* & -0.040 & -0.176** & -0.136* \\    
  & (0.029) & (0.043) & (0.085) & (0.093) & (0.043) & (0.086) & (0.095) \\                           
Change in house prices & 0.631*** & 0.295* & -0.534** & -0.829*** & 0.271* & -0.792*** & -1.063*** \\
  & (0.137) & (0.180) & (0.298) & (0.340) & (0.180) & (0.281) & (0.328) \\                           
Log home value & 0.687** & 0.665 & 1.329** & 0.664  & 0.600 & 1.877** & 1.277* \\                    
  & (0.396) & (0.645) & (0.796) & (0.619) & (0.644) & (0.992) & (0.817) \\                           
Log income per capita & 2.247* & -0.345 & 2.424 & 2.769* & 0.847 & -0.568 & -1.415  \\               
  & (1.650) & (1.800) & (2.102) & (1.698) & (2.455) & (6.804) & (7.276) \\                           
Household size & 0.313 & 0.158 & -1.099 & -1.258  & -0.399 & 0.910 & 1.309  \\                       
  & (1.006) & (1.136) & (1.320) & (1.134) & (1.450) & (4.046) & (4.523) \\                           
Married & 5.351** & 6.662* & 4.786 & -1.877  & 8.838** & 1.086 & -7.752* \\                          
  & (2.894) & (4.643) & (5.559) & (2.443) & (4.656) & (3.085) & (5.199) \\                           
Health & -1.058 & 0.452 & -2.794* & -3.246** & 0.314 & -2.474 & -2.788  \\                           
  & (0.948) & (1.206) & (1.711) & (1.455) & (1.177) & (2.812) & (3.156) \\                           
Non-economic sentiment & 1.644 & 3.077 & 7.843 & 4.766  & 7.893* & -11.589 & -19.482* \\                  
  & (4.200) & (5.806) & (7.925) & (7.288) & (5.841) & (10.901) & (12.696) \\                         
Economic sentiment & 8.188*** & 7.969** & -0.398 & -8.367* & 8.619** & 8.366 & -0.253  \\                 
  & (3.445) & (4.805) & (6.746) & (6.423) & (4.574) & (9.380) & (9.980) \\                           
Age & - & - & - & -0.090  & - & - & - \\                                                             
  &   &   &   & (0.104) &   &   &   \\                                                               
Sand state & - & - & - & 4.732** & - & - & - \\                                                            
  &   &   &   & (2.767) &   &   &   \\                                                               
Female & - & - & - & -7.249*** & - & - & - \\                                                        
  &   &   &   & (2.211) &   &   &   \\                                                               
White & - & - & - & 2.028  & - & - & - \\                                                            
  &   &   &   & (7.262) &   &   &   \\                                                               
Schooling & - & - & - & 3.622* & - & - & - \\                                                        
  &   &   &   & (2.405) &   &   &   \\               \hline                                                
Wald Test &   &   &   & 83.540*** &   &   & 21.697*** \\  
\hline   
N $\times$ T & 3888     & \multicolumn{3}{c|}{3888} & 3024      & 864 & 3888 \\
N                   & 216    & \multicolumn{3}{c|}{216}   & 216   & 216 & 216     \\ 
\hline                               
\end{tabular}                        
\caption{FFE and FE estimates of slope coefficients in the two segments before and after the change point: $\hat \beta_1$ and $\hat \beta_2$. A single change point is found by the OLS change point estimator (with HQIC) at $t=14$ (2012Q2). Column 1 shows results of a standard fixed effects panel regression where the presence of the change point is ignored. Standard errors are reported in brackets and account for autocorrelation and conditional heteroskedasticity (***, **, and * indicate significance levels at $0.01$, $0.05$, and $0.1$).}
\label{tab:results2}                                                        
\end{table}  

\FloatBarrier

\section{Conclusion}
In this paper, we proposed a method for estimating short panels subject to multiple change points and unobserved, possibly time-varying individual effects. We propose first estimating by OLS all the change points that occur in the pseudo-true parameters, under relatively general time-variation in the individual effects. Next, we assume that the individual effects either only change at these identified change points, or remain constant over the sample, and contrast the asymptotic properties of two consistent slope estimators, helping to identify the number and location of changes in the slope parameters.  We demonstrate the usefulness of our method via two applications: the enviromental Kuznets curve and house price expectations.

Our method can also be used as a diagnostic tool for model specification in short panels: if changes are found at each point in time, the specification should be revisited for more parsimonious modelling of time-variation, and if changes occur rarely, then change point modelling is a better alternative.

\section*{Bibliography}
Apergis, N., and Payne, J. E. (2009). {CO2 emissions, energy usage, and output in Central America.} \textit{Energy Policy} 37, 3282-3286.\\
Arouri, M. E. H., Youssef, A. B., M'henni, H., and Rault, C. (2012). {Energy consumption, economic growth and CO2 emissions in Middle East and North African countries.} \textit{Energy Policy} 45, 342-349.\\
Armona, R., Fuster, A. and Zafar, B. (2018). {Home price expectations and behavior: evidence from a
randomized information experiment.} \textit{Review of Economic Studies}, forthcoming.\\
Aue, A. and Horv\'{a}th, L. (2013). {Structural breaks in time series}, \textit{Journal of Time Series Analysis} 34, 1-16.\\
Bai, J. (2009). {Panel data models with interactive fixed effects.} \textit{Econometrica} 77, 1229-1279.\\
Bai, J. (2010). {Common breaks in means and variances for panel data.} \textit{Journal of Econometrics} 157, 78-92.\\
Bai, J. and Li, K. (2014). {Theory and methods of panel data models with interactive effects.} \textit{Annals of Statistics} 42, 142-170.\\
Bai, J., and Perron, P. (1998). {Estimating and testing linear models With multiple structural changes.} \textit{Econometrica} 66, 47-78.\\
Baier, S. L., and Bergstrand, J. H. (2007). {Do free trade agreements actually increase members' international trade?.} \textit{Journal of International Economics} 71(1), 72-95.\\
Baltagi, B.H., Feng, Q. and Kao, C. (2016). {Estimation of heterogeneous panels with
structural breaks}, \textit{Journal of Econometrics} 191, 176-195.\\
Baltagi, B.H., Kao, C. and Liu, L. (2017). {Estimation and identification of change points in panel models with nonstationary or stationary regressors and error term}, \textit{Econometric Reviews 36: 85-102.}\\
Bardwell, L., Fearnhead, P., Eckley, I.A., Smith, S. and Spott, M. (2018). {Most recent changepoint detection in panel data.} \textit{Technometrics}, \href{https://doi.org/10.1080/00401706.2018.1438926}{https://doi.org/10.1080/00401706.2018.1438926}/\\
Blanco, L. and Ruiz, I. (2013). {The Impact of Crime and Insecurity on Trust in Democracy and Institutions,} \textit{American Economic Review: Papers and Proceedings} 103,  284-288.\\
Börsch-Supan, A., Brandt, M., Hunkler, C., Kneip, T., Korbmacher, J., Malter, F., Schaan, B., Stuck, S. and Zuber, S. (2013). {Data resource profile: the Survey of Health, Ageing and Retirement in Europe (SHARE).} \textit{International Journal of Epidemiology} 42, 992-1001.\\
Chan, J., Horv\'{a}th, L., and Hu\v{s}kov\'{a}, M. (2013). {Darling-Erd\H{o}s limit results for change point detection in panel data.} \textit{Journal of Statistical Planning and Inference} 143, 955-970.\\
Chan, N. H., Yau, C. Y., and Zhang, R. (2014). {Group LASSO for Structural Break Time Series.} \textit{Journal of American Statistical Association} 109, 590-599.\\
Cs\"{o}rg\"{o}, M., and Horv\'{a}th, L. (1997). {Limit theorems in change point Analysis.} \textit{Wiley Series in Probability and Statistics} Vol. 18. John Wiley \& Sons Inc.\\
Cho. H. (2016). {change point detection in panel data via double CUSUM statistic.} \textit{Electronic Journal of Statistics} 10, 2000-2038.\\
Cho, H. and Fryzlewicz, P. (2015). {Multiple change point detection for high dimensional time series via sparsified binary segmentation.} \textit{ Journal of the Royal Statistical Society B} 77, 475-507.\\
de Wachter, S. and Tsavalis, E. (2012). {Detection of structural breaks in linear dynamic panel data models}, \textit{Computational Statistics and Data Analysis} 56, 3020-3034. \\
Emerson, J. and Kao, C. (2001). {Testing for structural change of a time trend regression in panel data: Part I.} \textit{Journal of Propagations in Probability and Statistics} 2, 57-75.\\
Emerson, J. and Kao, C. (2002). {Testing for structural change of a time trend regression in panel data: Part II.} \textit{Journal of Propagations in Probability and Statistics} 2, 207-250.\\
Farhani, S., Mrizak, S., Chaibi, A., and Rault, C. (2014). {The environmental Kuznets curve and sustainability: A panel data analysis.} \textit{Energy Policy} 71, 189-198.\\
Geng, N. and van Soest, A. H. O. (2014). {House price expectations.} \textit{IZA Working Paper 8536}, \href{http://ftp.iza.org/dp8536.pdf}{http://ftp.iza.org/dp8536.pdf}.\\
Hall, A. R., Osborn, D. and Sakkas, N. (2013). {Inference about structural breaks using information criteria}, \textit{The Manchester School} 81, 54-81.\\
Harchaoui, Z., and L\'{e}vy-Leduc, C. (2010). {Multiple change point estimation with a total variation enalty.} \textit{Journal of the American Statistical Association} 105, 1480-1493.\\
Helm, D. (2012). {Climate policy: the Kyoto approach has failed.} \textit{Nature} 491, 663-665. \\
Horv\'ath, L., and Hu\v{s}kov\'a, M. (2012). {change point detection in panel data.} \textit{Journal of Time Series Analysis} 33, 631-648.\\
Kim, D. (2011). {Estimating a common deterministic time trend break in large panels with cross sectional dependence.} \textit{Journal of Econometrics} 164(2), 310-330.\\
Kyle, M. and Williams, H. (2016). {Is American health care uniquely inefficient? Evidence from prescription drugs.} \textit{American Economic Review} 107, 486-490.\\
Lean, H. H., and Smyth, R. (2010). {CO2 emissions, electricity consumption and output in ASEAN.} \textit{Applied Energy} 87, 1858-1864.\\
Li, D., Qian, J. and Su, L. (2016). {Panel data models with interactive fixed effects and multiple structural breaks.} \textit{Journal of the American Statistical Association}, 516: 1804-1819.\\
Moon, H. R., and Weidner, M. (2015). {Linear regression for panel With unknown number of factors as interactive fixed effects.} Econometrica 83, 1543-1579.\\
Nimomiya, Y. (2005). {Information criterion for Gaussian change point model.} \textit{Statistics and Probability Letters} 72, 237-247.\\
Perron, P. and Yamamoto, Y. (2015). {Using OLS to estimate and test for structural changes in models with endogenous regressors.} \textit{Journal of Applied Econometrics} 28, 119-144. \\
Pesaran, H. (2006). {Estimation and inference in large heterogeneous panels with a multifactor error structure.} \textit{Econometrica} 74, 967-1012.\\
Okuy, R. and Wang, W. (2018). {"Heterogeneous structural breaks in panel data models.} \textit{Working Paper}, \href{"Heterogeneous structural breaks in panel data models}, \href{https://papers.ssrn.com/sol3/papers.cfm?abstract_id=3031689}{https://papers.ssrn.com/sol3/papers.cfm?abstract$\_$id=3031689}.\\
Qian, J. and Su, L. (2016). {Shrinkage estimation of common breaks in panel data models via adaptive group fused LASSO.} \textit{Journal of Econometrics} 191, 86-109.\\
Qu, Z., and Perron, P. (2007). {Estimating and testing structural changes in multivariate regressions.} \textit{Econometrica} 75, 459-502.\\
Torgovitski, L. (2015). {Panel data segmentation under finite time horizon.} \textit{Journal of Statistical Planning and Inference} 167, 69-89.\\
Vert, J.-P. and Bleakley, K. (2010). {Fast detection of multiple change points shared by many signals using group LARS.} \textit{Proceedings of Advances in Neural Information Processing Systems 23.}\\

\section*{Appendix A}
\begin{proof}[Proof of Theorem \ref{theo1}]

 \textbf{
Part (i).} For simplicity and for this part of the proof only, drop the $m^0$ subscripts on all quantities and the hat on $\hat \vlam$. Recall that $\hat \vbeta_{\vlam} = (\tilX' \tilX)^{-1} \tilX'\vy$ is the OLS estimator using the partition $\vlam$. Let $\vvc = \vc_{t=1:T} (\vc_{i=1:N} (c_{it}))$, $\veps= \vc_{t=1:T} (\vc_{i=1:N} (\epsilon_{it}))$, $\vgam_{\vlam^0} = \vbeta^0 + (\tilX^0 \tilX^0)^{-1} \tilX^{0'} \vvc$, $\vP_{\tilX^0} =\tilX^0 (\tilX^{0'} \tilX^0)^{-1}\tilX^{0'}$ and $\vM_{\tilX^0} = \vI_{NT}-\vP_{\tilX^0}$, with $\vI_{NT}$  the $NT \times NT$ identity matrix. With this notation,
\begin{align*}
S_{NT}( \vgam_{\vlam^0},\vlam^0) &= (NT)^{-1} ||\vy - \tilX^0 \vgam_{\vlam^0}||^2 =(NT)^{-1}||\veps + \vvc + \tilX^0(\vbeta^0- \vgam_{\vlam^0})||^2 \\
&= (NT)^{-1} \veps'\veps + 2(NT)^{-1}  \veps'\vvc + 2 (NT)^{-1}\veps' \tilX^0 \vbeta^0 \\
&-2(NT)^{-1}  \veps' \tilX^{0} (\tilX^{0'} \tilX^0)^{-1} \tilX^{0'}\vvc + ||\vvc+\tilX^0(\vbeta^0- \vgam_{\vlam^0})||^2.
\end{align*}
By A\ref{a1}(i)-(iv), $ 2(NT)^{-1}  \veps'\vvc =\op(1)$,   $2 (NT)^{-1}\veps' \tilX^0 =\op(1)$, $(NT)^{-1}\tilX^{0'} \tilX^0 =\Op(1)$, $(NT)^{-1} \tilX^{0'}\vvc =\Op(1)$. Also, $\tilX^0(\vbeta^0- \vgam_{\vlam^0}) = -\vP_{\tilX^0} \vvc$ and $\vvc+\tilX^0(\vbeta^0- \vgam_{\vlam^0}) =\vM_{\tilX^0} \vvc$ . Therefore, 
\begin{align}\label{eq1} S_{NT}( \vgam_{\vlam^0},\vlam^0) = (NT)^{-1} \veps'\veps + \vvc'\vM_{\tilX^0} \vvc +\op(1).  
\end{align}
On the other hand, we have:
        \begin{align} \nonumber
        &S_{NT}( \hat \vbeta_{\vlam},\vlam) - S_{NT}( \vgam_{\vlam^0},\vlam^0) =  (NT)^{-1} (\vy-\tilX \hat \vbeta_{\vlam})'(\vy-\tilX \hat \vbeta_{\vlam}) \\\nonumber
&=         (NT)^{-1} (\veps + \vvc + \tilX^0\vbeta^0-\tilX \hat \vbeta_{\vlam})' (\veps+ \vvc + \tilX^0\vbeta^0-\tilX \hat \vbeta_{\vlam})\\\nonumber
&=(NT)^{-1} \veps'\veps + 2(NT)^{-1} \veps' (\vvc+ \tilX^0\vbeta^0-\tilX \hat \vbeta_{\vlam}) + (\vvc + \tilX^0\vbeta^0-\tilX \hat \vbeta_{\vlam})'(\vvc + \tilX^0\vbeta^0-\tilX \hat \vbeta_{\vlam})\\\nonumber
&-[(NT)^{-1} \veps'\veps + \vvc'\vM_{\tilX^0} \vvc] +\op(1)\\\nonumber
& = 2(NT)^{-1} \veps' (\vvc+ \tilX^0\vbeta^0-\tilX \hat \vbeta_{\vlam}) + (\vvc + \tilX^0\vbeta^0-\tilX \hat \vbeta_{\vlam})'(\vvc + \tilX^0\vbeta^0-\tilX \hat \vbeta_{\vlam})-(NT)^{-1} \vvc'\vM_{\tilX^0} \vvc \\
&+\op(1)  \equiv 2 I + II -III +\op(1).\label{eq2}
            \end{align}
By definition, $S_{NT}( \hat \vbeta_{\vlam},\vlam) - S_{NT}(\vgam_{\vlam^0},\vlam^0) \leq 0$. Therefore,
$
S_{NT}( \hat \vbeta_{\vlam},\vlam) -  S_{NT}(\vgam_{\vlam^0},\vlam^0) = 2I\ + II-III+\op(1) \leq 0.$
We prove consistency of $\vlam$ (and therefore of the break-point estimators $\hat T_j$) in two steps. In step 1, we show that $I=\op(1)$ and $II-III= \Op(1)$, meaning that $II-III$ asymptotically dominates $I$, so $\plim(II- III) \leq 0$. In step 2, we prove consistency by contradiction; if there is at least one change point estimator that is not consistent for its true value, then $\plim( II-III)>C$ for some $C>0$, contradicting $\plim(II-III) \leq 0$.\\
\textbf{
Step 1.}
For any partition $\vlam$, 
\begin{align*}
&\tilX \hat \vbeta_{\vlam}= \tilX (\tilX' \tilX)^{-1} \tilX' (\tilX^0 \vbeta^0 + \vvc) + \op(1) \\
& \Leftrightarrow  \vvc + \tilX^0\vbeta^0-\tilX \hat \vbeta_{\vlam} = (\vI_{NT} - \tilX(\tilX' \tilX)^{-1} \tilX')(\vvc +\tilX^0\vbeta^0) +\op(1) \\
&\Leftrightarrow II = (NT)^{-1} (\vvc +\tilX^0\vbeta^0)' (\vI_{NT} - \tilX(\tilX' \tilX)^{-1} \tilX') (\vvc +\tilX^0\vbeta^0) \\
& = (NT)^{-1} (\vvc +\tilX^0\vbeta^0)'(\vvc +\tilX^0\vbeta^0) \\ \nonumber
&- (NT)^{-1} (\tilX' \vvc +\tilX' \tilX^0\vbeta^0)'[(NT)^{-1} \tilX' \tilX]^{-1}(NT)^{-1} (\tilX' \vvc +\tilX' \tilX^0\vbeta^0)\\
& \equiv II_A - II_B.
\end{align*}
Letting $\Delta \lambda_j^0 = \lambda_j^0-\lambda_{j-1}^0$, by A\ref{a1}(iii)-(vi), 
\begin{align*}
II_A &= (NT)^{-1}\vvc'\vvc + 2 (NT)^{-1}\vbeta^{0'} \tilX^{0'} \vvc + (NT)^{-1}\vbeta^{0'} (\tilX^{0'}\tilX^0)^{-1} \vbeta^0 \\
&=\sigma^2_c + \vbeta^{0'} \vc_{1:m^0+1} (\Delta \lambda_j^0 \va_j^0) + \sum_{j=1}^{m^0+1} (\Delta \lambda_j^0)^{-1} \vbeta_j^{0'} (\vQ_j^0)^{-1} \vbeta_j^0 +\op(1)= \Op(1).
\end{align*}
By A\ref{a1}(iii),
\begin{align*}
(NT)^{-1}\tilX '\vvc &= \vc_{j=1:m^0+1}\left((NT)^{-1} \sum_{s=1}^{m^0+1}\sum_{t \in (I_j\cap I_s^0)} \vx_{it}c_{it}\right)  +\op(1)\\
& = \vc_{1:m^0+1}(\sum_{s=1}^{m^0+1} \lambda_{js} \va_s^0)+\op(1), \end{align*}
where $\lambda_{js}$ is the cardinality of the set ($I_j\cap I_s^0$) - recall that $I_j$ is the estimated regime -  divided by the sample size $T$. Similarly, by A\ref{a1}(iv), $
(NT)^{-1}\tilX '\tilX^0 \vbeta^0\inp \vc_{1:m^0+1}(\sum_{s=1}^{m^0+1} \lambda_{js} \vQ_s^0 \vbeta_s^0).$ Therefore, 
$
(NT)^{-1}\tilX '( \vvc + \tilX^0 \vbeta^0) \inp \vc_{j=1:m^0+1}\left[\sum_{s=1}^{m^0+1} \lambda_{js} (\va_s^0+ \vQ_s^0 \vbeta_s^0)\right] = \Op(1).
$
Also, by A\ref{a1}(iv), it can be shown that $(NT)^{-1} \tilX' \tilX = \diag_{j=1:m^0+1} \sum_{s=1}^{m^0+1} \lambda_{js} \vQ_s^0 = \Op(1)$. Therefore, $II_B = \Op(1)$, and 
\begin{align}\label{eq3}
II = II_A -II_B = \Op(1).
\end{align}

By A\ref{a1}(iii), (iv) and (vi), $III = (NT)^{-1} \vvc' \vM_{\tilX^0} \vvc = \sigma_{c,T}^2+ \sum_{j=1}^{m^0+1} \Delta \lambda_j^0 \va_{j}^{0'} \vQ_j^0 \va_j^0 = \Op(1)$. Therefore, 
$
II-III = \Op(1).
$ 
On the other hand, using some of the above results and A\ref{a1}(ii),
\begin{align}\nonumber
I & = (NT)^{-1} \veps' \vvc+ [(NT)^{-1}\veps'\tilX^0 ]\vbeta^0 -(NT)^{-1}\veps'\tilX \hat \vbeta_{\vlam} = \op(1) + \op(1) \\
&- \op(1) [ (\tilX' \tilX)^{-1} \tilX' \tilX^0 \vbeta^0 +(\tilX' \tilX)^{-1} \tilX' \vvc +\op(1)] = \op(1)[ \Op(1)+ \Op(1)+\op(1)] = \op(1).\label{eq4}
\end{align}
Hence, $II-III$ dominates $I$ in probability order, so $\plim(II-III) \leq 0$.

\textbf{Step 2.} We now show that the change point estimators are consistent by contradiction. Suppose that $m^0 < T-1$, else there are change points at each sample period, so by default, all the estimated change points are equal to the true change points. If $m^0< T-1$, and not all the change points are equal to the true change points with probability one in the limit, then there is at least one estimated regime $j$ that contains a true change point. Formally, there exists $k\in \{1,\ldots, m^0\}$ and $j\in \{1,\ldots, m^0+1\}$, such that $\hat T_{j-1} < T_k^0 < \hat T_{j}$ (here $\hat T_0=0$ and $\hat T_{m^0+1}=T$). Therefore, in both periods  $T_k^0$ (belonging to the true regime $I_k^0$) and $T_{k}^0+1$ (belonging to the true regime $I_{k+1}^0$), we are estimating $\hat \vbeta_j$ (note that here we can take more than one period but for simplicity we only consider one). Let $\vX_{t}$ be the $N\times p$ matrix with rows $\vx_{t}'$, let $\vvc_{t}$ be the $N\times 1$ vector with elements $c_{it}$, let $\hat \vbeta_t = \hat\vbeta_j$ in interval $[\hat T_{j-1}, \hat T_j]$, let $\vP_{\vX_{t}} =\vX_{t} (\vX_{t}'\vX_{t})^{-1} \vX_{t}'$ and $\vM_{\vX_t} = \vI_{N}-\vP_{\vX_{t}}$, and denote $\overline \vX_{k+s} =\vX_{T_k^0+s}$ for $s=0,1$. Also, let $\vbeta_t^0 = \sum_{k=1}^{m^0+1} \vbeta_j^0  1\{t \in I_k^0\}$ and $\hat \vbeta _t = \sum_{j=1}^{m^0+1} \hat \vbeta_j  1\{t \in [\hat T_{j-1}+1,\hat T_j]\}$. Then:
\begin{align*}
II & =   (NT)^{-1} \sum_{t=1}^T ||\vvc_t + \vX_{t} \vbeta_{t}^0 -\vX_t \hat \vbeta_t)||^2 \\
& =  (NT)^{-1} \sum_{t=1, t\neq T_{k}^0, t\neq T_{k}^0+1}^T ||\vvc_t + \vX_{t} \vbeta_{t}^0 -\vX_t \hat \vbeta_t)||^2 \\
&+
(NT)^{-1} \sum_{s=0}^1  ||\vvc_{k+s} + \overline \vX_{k+s} \vbeta_{k+s}^0 - \overline \vX_{k+s} \hat \vbeta_j||^2 \\
& \geq (NT)^{-1} \sum_{t=1}^T(\vvc_t + \vX_{t} \vbeta_{t}^0 -\vX_t \hat \vbeta_t)' \vM_{\vX_t}(\vvc_t + \vX_{t} \vbeta_{t}^0 -\vX_t \hat \vbeta_t) \\
&+ \sum_{s=0}^1 (NT)^{-1} (\vvc_{k+s} + \overline \vX_{k+s} \vbeta_{k+s}^0 - \overline \vX_{k+s} \hat \vbeta_j)' \vP_{\overline \vX_{k+s}} (\vvc_{k+s} + \overline \vX_{k+s} \vbeta_{k+s}^0 - \overline \vX_{k+s} \hat \vbeta_j)\\
&=(NT)^{-1} \sum_{t=1}^T\vvc_t' \vM_{\vX_t} \vvc_t \\
&+ \sum_{s=0}^1 (NT)^{-1} (\vvc_{k+s} + \overline \vX_{k+s} \vbeta_{k+s}^0 - \overline \vX_{k+s} \hat \vbeta_j)' \vP_{\vX_{k+s}} (\vvc_{k+s} + \overline \vX_{k+s} \vbeta_{k+s}^0 - \overline \vX_{k+s} \hat \vbeta_j)\\
&  = (NT)^{-1} \sum_{t=1}^T\vvc_t' \vM_{\vX_t} \vvc_t\\
&+T^{-1} \sum_{s=0}^1 ( N^{-1} \overline \vX_{k+s}'\overline \vX_{k+s})^{-1} (N^{-1} \overline \vX_{k+s}'\vvc_{k+s} + N^{-1} \overline \vX_{k+s}'\overline \vX_{k+s} \vbeta_{k+s}^0 - N^{-1} \overline \vX_{k+s}'\overline \vX_{k+s} \hat \vbeta_j) ' \\
& \times ( N^{-1} \overline \vX_{k+s}'\overline \vX_{k+s}) \\
& \times (N^{-1}\vX_{k+s}'\overline \vX_{k+s})^{-1} (N^{-1}\overline \vX_{k+s}'\vvc_{k+s} + N^{-1} \overline \vX_{k+s}'\overline \vX_{k+s} \vbeta_{k+s}^0 - N^{-1} \overline \vX_{k+s}'\overline \vX_{k+s} \vbeta_{k+s}^0 ),
\end{align*}
where the first inequality holds because $\bm d' \bm d \geq \bm d' \vM \bm d  $ for any vector $d$ and projection matrix $\vM$. Note that $N^{-1} \vX_{t}'\vX_{t} \inp \vQ_{j}^0$ for $t \in I_j^0$ by A\ref{a1}(iv), and $N^{-1} \vX_{t}'\vvc_{t} \inp \va_j^0$ for $t\in I_j^0$ by A\ref{a1}(iii). It follows  that $T^{-1} \sum_{t=1}^T\vvc_t' \vM_{\vX_t} \vvc_t= (NT)^{-1} \vvc' \vvc - \sum_{j=1}^{m^0+1} T^{-1} \sum_{t=1}^T \va_j^{0'} (\vQ_{j}^0)^{-1} \va_j^{0'}1[t\in I_j^0] +\op(1) = (NT)^{-1} \vvc' \vvc - \sum_{j=1}^{m^0+1} \Delta \lambda_j^0 \va_j^{0'} (\vQ_{j}^0)^{-1} \va_j^0 +\op(1)$. 

Note that $III= (NT)^{-1} \vvc'\vM_{\tilX^0} \vvc = (NT)^{-1} \vvc'\vvc - \sum_{j=1}^{m^0+1} (\lambda_j^0 \va_j^0)' (\lambda_j^{0} \vQ_j^0)^{-1} (\lambda_j^0 \va_j^0) +\op(1)$. Therefore, $(NT)^{-1} \sum_{t=1}^T\vvc_t' \vM_{\vX_t} \vvc_t - III =\op(1)$.

By A\ref{a1}(iv), $N^{-1} \overline \vX_{k+s}'\overline \vX_{k+s} \inp \vQ_{k+s}^0$ and by A\ref{a1}(iii),  $ N^{-1} \overline \vX_{k+s}'\vvc_{k+s} + N^{-1} \overline \vX_{k+s}'\overline \vX_{k+s} \vbeta_{k+s}^0 \inp  \va_{k+s}^0 + \vQ_{k+s}^0 \vbeta_{k+s}^0$. This also implies
$ ( N^{-1} \overline \vX_{k+s}'\overline \vX_{k+s})^{-1} (N^{-1} \overline \vX_{k+s}'\vvc_{k+s} + N^{-1} \overline \vX_{k+s}'\overline \vX_{k+s} \vbeta_{k+s}^0) \inp (\vQ_{k+s}^0)^{-1} \va_{k+s}^0 + \vbeta_{k+s}^0 = \vgam_{k+s}^0$. Letting $\eta_{k+s} = \min \mbox{eig } \vQ_{k+s}^0 >0$ by A\ref{a1}(iv), and noting that $|| Q_{k+s}^0|| > \eta_{k+s}$ by the definition of the matrix norm in the notation section, we have:
\begin{align*}
II-III &\geq \op(1)+ T^{-1} \sum_{s=0}^1 (\vgam_{k+s}^0 -\hat \vbeta_j)' \vQ_{k+s}^0 (\vgam_{k+s}^0 - \hat \vbeta_j) \geq \op(1)+ T^{-1} \sum_{s=0}^1 \eta_{k+s} ||\vgam_{k+s}^0 -\hat \vbeta_j||^2 \\
& \geq \op(1)+ T^{-1} 0.5 \min (\eta_{k},\eta_{k+1}) ||\vgam_{k+1}^0 - \vgam_{k}^0||^2 >C +\op(1),
\end{align*}
for some $C>0$, where the second to last inequality follows from $||a-c||^2+||b-c||^2 \geq 0.5 ||b-c||^2$, for all same size vectors $a,b,c$, and the last inequality from A\ref{a1}(vi). Since $II-III>C+\op(1)$ or $\plim(II-III)>C$ contradicts $\plim(II-III)=\op(1)$, all estimators are consistent, i.e. $\lim_{N \rightarrow \infty} P(\hat T_i = T_i^0) = 1$. Therefore, $\hat \vbeta_{\vlam} = \hat \vbeta_{\vlam^0} +\op(1) = \vgam_{\vlam^0}^0 + (\tilX^{0'}\tilX^0)^{-1} \tilX^{0'} \veps =  \vgam_{\vlam^0}^0+ \op(1)$, implying that $\hat \vbeta_{j,\vlam} \inp \vgam_j^0$.\\
For parts (ii)-( iii) of the proof, we reintroduce the $m,m^0$ subscripts to indicate different numbers of change points, and the hats on the estimated partitions ($\hat \vlam$).\\
\textbf{Part (ii). } 

\textbf{Case 1. $m=m^0$}. From part (i), all change point estimators are equal to the true change points in the limit, and each element of the sum of squared residuals is $\Op(1)$. Therefore, for deriving limits of the sum of squared residuals, we can proceed as if the estimators were equal to the true change points.  By the proof of Theorem 1 and A\ref{a1}(vi), it can be shown that:
\begin{align*}
S_{NT}(\hat \vbeta_{\hat \vlam_{m^0}},\vlam_{m^0}) &= S_{NT}(\vgam_{\vlam_{m^0}}^0 ,\vlam_{m^0}^0) + \op(1) = (NT)^{-1} \veps'\veps+(NT)^{-1} \vvc'\vM_{\tilX_{m^0}^0} \vvc + \op(1) \\
& = \sigma^2_{\epsilon, T} + \sigma^2_{c,T}-\sum_{j=1}^{m^0+1} \Delta \lambda_j^0 \va_j^{0'} (\vQ_{j}^0)^{-1} \va_j^0 +\op(1).
\end{align*}
\textbf{Case 2. $m>m^0$}. First, suppose out of the $m$ change point estimators,  there are exactly $m^0$ that are consistent, i.e. those are $\hat T_j$, $j \in\{1,\ldots, m\}$ such that $\lim_{N\rightarrow \infty} P(\hat T_j =T_s^0) =1$. Then, assuming the sample size is large, we can proceed as if $\hat T_j=T_s^0$, since the sum of squared residuals for $m$ change points is $\Op(1)$ regardless of the partition considered (examine equations \eqref{eq2}-\eqref{eq4} and note that $I=\op(1)$, $II=\op(1)$ and $III=\Op(1)$ regardless of the number of change points imposed in estimation.) Additionally, there are $0\leq m_j<m-m^0$ change point estimators in each true regime $I_j^0$, so $m_j+1$ sub-regimes of $I_j^0$, which we denote by $I_{js}$ for $s=1,\ldots, m_j+1$. Denote by $\lambda_{js}, (s=1,\ldots, m_j)$ the associated change point estimators divided by $T$, and by $\hat \vbeta_{js}, (s=1,\ldots, m_{j}+1)$ the corresponding parameter estimators in each sub-regime of $I_j^0$, if $m_j \neq 0$. Also let $\vbeta_{\vlam_{m^0}^0} \equiv \vc_{1:m^0+1}(\hat \vbeta_j^0)$, where $\hat \vbeta_j^0$ refers to the estimators $\vbeta_j$ that use the true regime $I_j^0$. Then, letting $\Delta \lambda_{js} = \lambda_{js}-\lambda_{js-1}$ (where $\lambda_{j0}=\lambda_j^0$, $\lambda_{j m_j+1}=\lambda_{j+1}^0$), we have:
\begin{align*}
&(NT) [ S_{NT}(\hat \vbeta_{\hat \vlam_{m}},\vlam_{m}) -  S_{NT}(\hat \vbeta_{\vlam_{m^0}^0},\vlam_{m^0}^0)] \\
&=  \sum_{j=1}^{m^0+1} \mathbf 1[ m_j \neq 0] \sum_{s=1}^{m_j+1} \sum_{t\in I_{js}} \sum_{i=1}^N  (y_{it}- \vx_{it}'\hat \vbeta_{js})^2 - (y_{it}- \vx_{it}'\hat \vbeta_{j}^0)^2 \\
& =  \sum_{j=1}^{m^0+1} \mathbf 1[ m_j \neq 0] \sum_{s=1}^{m_j+1} (\hat \vbeta_{js}-\hat \vbeta_{j}^0)' \sum_{t\in I_{js}} \sum_{i=1}^N [2 \vx_{it}u_{it} - \vx_{it} \vx_{it}'(\hat \vbeta_{js}-\vbeta_j^0) - \vx_{it} \vx_{it}'(\hat \vbeta_{j}^0-\vbeta_j^0)] \\
& =   \sum_{j=1}^{m^0+1} \mathbf 1[ m_j \neq 0] \sum_{s=1}^{m_j+1} \sqrt{NT} (\hat \vbeta_{js}-\hat \vbeta_{j}^0)' \Delta \lambda_{js} \vQ_j^0 \sqrt{NT} (\hat \vbeta_{js}-\hat \vbeta_j^0) +\op(1),
\end{align*}
where the last equality uses the results $(NT)^{-1}\sum_{t\in I_{js}} \vx_{it} u_{it} = (\Delta \lambda_{js} \vQ_j^0)(\hat \vbeta_{js}-\hat \vbeta_j^0) +\op(1)$ and $(NT)^{-1}\sum_{t\in I_{js}} \vx_{it} \vx_{it}' =\Delta \lambda_{js} \vQ_j^0 +\op(1)$. Recalling that $\vgam_j^0 = \vbeta_j^0+ (\vQ_j^0)^{-1} \va_j^0$, it can be shown by standard arguments that $\Delta \lambda_j^0 (\hat \vbeta_j^0 - \vgam_j^0) = \sum_{s=1}^{m^0+1} \Delta \lambda_{js} (\hat \vbeta_{js}^0 -\vgam_j^0)+\op(1)$ and that $\sqrt{NT} \Delta \lambda_j^0 (\hat \vbeta_j^0 - \vgam_j^0) =\Op(1)$ (using A1(iv)). Therefore, $\sqrt{NT} (\hat \vbeta_{js}-\vgam_j^0)=\Op(1)$, and so $\sqrt{NT} (\hat \vbeta_{js}-\hat \vbeta_j^0) =\Op(1)$, and therefore
$
   (NT) [ S_{NT}(\hat \vbeta_{\hat \vlam_{m}},\vlam_{m}) - (NT)^{-1} S_{NT}(\hat \vbeta_{\vlam_{m^0}^0},\vlam_{m^0}^0)] =\Op(1).
$
Using $\log(1+a) = a-a^2/2+a^3/3+...$,
\begin{align*}
&\log [S_{NT}(\hat \vbeta_{\hat \vlam_{m}},\vlam_{m})] - \log [S_{NT}(\hat \vbeta_{\vlam_{m^0}^0},\vlam_{m^0}^0)] = \log \left[1+ \frac{(NT)[ S_{NT}(\hat \vbeta_{\hat \vlam_{m}},\vlam_{m}) -  S_{NT}(\hat \vbeta_{\vlam_{m^0}^0},\vlam_{m^0}^0)]}{ (NT) S_{NT}(\hat \vbeta_{\vlam_{m^0}^0},\vlam_{m^0}^0)}\right] \\
&= \log \{1+ \Op[(NT)^{-1}]\} = \Op[(NT)^{-1}].
\end{align*}

Therefore, in this case, for $m>m^0$,
$IC(m)- IC(m^0) = \Op[(NT)^{-1}] + (p^*_m-p^*_{m^0}) \ell_{NT}$, and $p^*_m-p^*_{m^0}>0$ and dominates the $\Op(NT)^{-1}$ term. Therefore, $\plim[IC(m)>IC(m^0)]=1$.

Now suppose that we have at least $m-m^0+1$ change point estimators that are different than the true ones in the limit, so that not all change points can be paired with consistent estimators. Noting that Step 2 of the Proof of part (i) does not depend on the number of breaks, the same arguments as in Step 2 of part (i) can be employed to show that for some constant $C>0$, $S_{NT}(\hat \vbeta_{\hat \vlam_{m}},\vlam_{m})] > C+  S_{NT}(\hat \vbeta_{\vlam_{m^0}^0},\vlam_{m^0}^0) +\op(1)$. Therefore,
\begin{align*}
\plim \{\log [S_{NT}(\hat \vbeta_{\hat \vlam_{m}},\vlam_{m})] - \log [S_{NT}(\hat \vbeta_{\vlam_{m^0}^0},\vlam_{m^0}^0)]\} > log (1+C).
\end{align*}
So, $IC(m)-IC(m^0)> \log(1+C) +(p^*_m-p^*_{m^0}) \ell_{NT} +\op(1)> \log(1+C) +\op(1)$, so $\plim[IC(m)>IC(m^0)]=1$, which concludes the proof for $m>m^0$.

\textbf{Case 3. $m<m^0$}. Here, necessarily $m^0-m$ changes are not estimated at all, so there must be at least one true change point that is skipped in the estimation. Then by the same arguments as in Step 2 of part(i), and in the last two paragraphs above, there is a $C>0$ such that $IC(m)-IC(m^0)> \log(1+C) + (p^*_m-p^*_{m^0}) \ell_{NT} +\op(1)> \log(1+C)+\op(1)$, so $\plim[IC(m)>IC(m^0)]=1$, which concludes the proof for $m<m^0$. 

\textbf{
Part (iii). }Note that because $ \plim [S_{NT}(\hat \vbeta_{\hat \vlam_{m}},\vlam_{m}) - (NT)^{-1} S_{NT}(\hat \vbeta_{\vlam_{m^0}^0},\vlam_{m^0}^0)] =\op(1)$ for all the change points being consistently estimated (and some more), while we have that $\plim [S_{NT}(\hat \vbeta_{\hat \vlam_{m}},\vlam_{m}) - (NT)^{-1} S_{NT}(\hat \vbeta_{\vlam_{m^0}^0},\vlam_{m^0}^0)] >C$ otherwise, the sum of squared residuals is smallest in the limit when all change points are consistently estimated. Therefore, if we impose more than $m^0$ change points, all the change points can be paired with consistent estimators.
\end{proof}

\begin{proof}[Proof of Lemma \ref{lem1}] \hfill \\
\textbf{A\ref{a1}(i).} Let $\bm \omega_i = \sum_{t=1}^T \vx_{it} \epsilon_{it}$. By A\ref{a2}(i),  $E(\bm \omega_i)=0$. By A\ref{a2}(ii), using the triangle inequality and H\"older's inequality, $||\bm \omega_i||_{2+\delta/2} \leq   \sum_{t=1}^T \| \vx_{it} \|_{4+\delta} \, \|\epsilon_{it}\|_{4+\delta} \leq T \sup_t \|  \vx_{it} \|_{4+\delta} \, \sup_t \|\epsilon_{it}\|_{4+\delta} < \infty$. By A\ref{a2}(iii), $E(\bm \omega_i \bm \omega_i')= E\left( \sum_{t,s=1}^T \vx_{it}\vx_{is}\epsilon_{it}\epsilon_{is}\right) = \sum_{t,s=1}^T \bm V_{ts}^0$. Therefore, by the CLT, $N^{-1/2} \sum_{i=1}^N \bm \omega_i \ind \mathcal N(0,\vV)$, with $\vV= \sum_{t,s=1}^T \bm V_{ts}^0$.\\
\textbf{A\ref{a1}(ii).} By A\ref{a2}(v), $E(\epsilon_{it} c_{it})=0$, and by A\ref{a2}(iii) and (v), $||\epsilon_{it} c_{it}||_{1+\delta/2} \leq  \sup_t \|  c_{it} \|_{2+\delta} \, \sup_t \|\epsilon_{it}\|_{2+\delta}<\infty$. Therefore, by the WLLN, $N^{-1} \sum_{i=1}^N \epsilon_{it} c_{it} \inp 0$. \\
\textbf{A\ref{a1}(iii), (iv), (vi)}. These can be shown by similar arguments to A1(ii).
\end{proof}

\begin{proof}[Proof of Lemma 2] \hfill \\
\textbf{A\ref{a3}(i).} This can be shown by simply applying the WLLN to $\vx_{it} \vx_{is}'$ under A\ref{a1}(i)-(ii) and A\ref{a4}(ii). 
\textbf{A\ref{a3}(ii).} Let $\bm \omega_{ij} = \sum_{t\in I_j^0}(\vx_{it}-\overline{\vx}_{ij})\epsilon_{it}$, independent over $i$ by A\ref{a1}(i). By A\ref{a4}(i), $E(\bm \omega_{ij})=0$. By A\ref{a1}(ii), $||(\vx_{it}-\overline{\vx}_{ij})\epsilon_{it}||_{2+\delta/2}^{2+\delta/2} \leq ||\vx_{it} \epsilon_{it}||_{2+\delta/2}^{2+\delta/2} +  \left(\sum_{t\in I_j^0}||\vx_{it}||_{4+\delta} ||\epsilon_{it}||_{4+\delta}\right)^{2+\delta/2} <\infty$. Therefore, $||\bm \omega_{ij}||_{2+\delta/2} <\infty$. By A\ref{a4}(iii), $E (\bm \omega_{ij} \bm \omega_{ij}') = \sum_{t\in I_j^0} E(\vx_{it} \vx_{it}' \epsilon_{it}^2)- (\Delta T_j^0)^{-1} \sum_{t,s\in I_j^0} E(\vx_{it}\vx_{is}' \epsilon_{it} \epsilon_{is}) = \bm \sum_{t \in I_j^0} \bm N_{jj}^0 - (\Delta T_j^0)^{-1} \bm \sum_{t,s \in I_j^0} \bm N_{js}^0$. Then letting $\bm \omega_i = \vc_{j=1:m^0+1}(\bm \omega_{ij})$, by CLT,  $N^{-1} \sum_{i=1}^N \bm \omega_i \ind \mathcal N(0,  \bm \sum_{t \in I_j^0} \bm N_{jj}^0 - (\Delta T_j^0)^{-1} \bm \sum_{t,s \in I_j^0} \bm N_{js}^0)$. By similar arguments, we can derive the correlation between $\bm \omega_{ij}$ and $\bm \omega_{ik}$, with $k\neq j$, to yield $\vW_1$ as a function of $\bm N_{js}^0$, and obtain the desired result $N^{-1} \sum_{i=1}^N \vc_{j \in \mathcal S} (\bm \omega_j) \ind \mathcal N(0, \vW_1)$.\\
\textbf{A\ref{a3}(iii).} This can be shown by similar arguments to A\ref{a3}(ii).
\end{proof}
\begin{proof}[Proof of Theorem \ref{theo2}]\hfill \\
\textbf{Part (i).  } For notational simplicity, we derive the results for $\mathcal S = \{1,\ldots, m^0+1\}$. The derivation for a smaller set $\mathcal S$ follows by similar arguments. By A\ref{a1}(iv) and A\ref{a3}(i),
\begin{align*}
&N^{-1} \sum_{i=1}^N \sum_{t \in I_j^0} \ (\vx_{it} - \overline \vx_{i,j})(\vx_{it} - \overline \vx_{i,j})' = N^{-1} \sum_{i=1}^N \sum_{t \in I_j^0} \ (\vx_{it} - \overline \vx_{i,j}) \vx_{it}' \\
& = \Delta T_j^0 \vQ_j^0 -  (\Delta T_j^0)^{-1} \sum_{s,t \in I_j^0} N^{-1} \sum_{i=1} \vx_{it} \vx_{is}' +\op(1) = \Delta T_j^0 \vQ_j^0 - (\Delta T_j^0)^{-1} \sum_{j,s \in I_j^0 } \vx_{it} \vx_{is}'+ \op(1) \\
& = \Delta T_j^0\vQ_j^0 - (\Delta T_j^0)^{-1}  \sum_{j,s \in I_j^0} \vOmega_{ts}^0 +\op(1) =\Delta T^0\vQ_j^0 - (\Delta T_j^0)^{-1}  \vQ_{jj}^0 .
\end{align*}
By A\ref{a3}(ii),
$
\vc_{j=1:m^0+1}\left(N^{-1/2} \sum_{i=1}^N \sum_{t \in I_j^0}\ (\vx_{it} - \overline \vx_{i,j})\epsilon_{it} \right) \ind \mathcal N(0, \vW_{1}).
$
Therefore,
\begin{align*}
N^{-1/2} (\hat{\vbeta}_{FE}-\vbeta^0) &= \diag_{j=1,\ldots, m^0+1} \left( N^{-1} \sum_{i=1}^N \sum_{t \in I_j^0} \ (\vx_{it} - \overline \vx_{i,j})(\vx_{it} - \overline \vx_{i,j})' \right) \times \\
& \times \vc_{j=1,\ldots, m^0+1}\left(N^{-1/2} \sum_{i=1}^N \sum_{t \in I_j^0}\ (\vx_{it} - \overline \vx_{i,j})\epsilon_{it} \right)\\
& \ind \diag_{j=1,\ldots, m^0+1} \left[(\Delta T_j^0 \vQ_j^0 - (\Delta T_j^0)^{-1}  (\vQ_{jj}^0) ^{-1} \right] \mathcal N(0, \vW_{1}) = \mathcal N(0, \vOmega_1^{-1} \vW_{1} \vOmega_1^{-1}).
\end{align*}

\textbf{Part (ii). } 
Note that
$
\sqrt{N} (\hat{\vbeta}_{FFE} - \vbeta^0)  =
\left( N^{-1} \sum_{i=1}^N \sum_{t=1}^T \widetilde{\vx}_{it} \widetilde{\vx}_{it}'\right)^{-1}
\left( N^{-1/2} \sum_{i=1}^N \sum_{t=1}^T \widetilde{\vx}_{it} \epsilon_{it}^*\right).
$ 
Recall that $\vw_{ij} = \sum_{t\in I_j^0} \vx_{it}/T$, and let $\vx_{it,j}$ the $(m^0+1)p$ vector with elements $(j-1)p+1, \ldots jp$ equal to $\vx_{it}$, and the rest equal to zero. Also let $ \diag(\bm O_{p\times p},\ldots, \bm O_{p\times p}, \bm A,\bm O_{p\times p}, \ldots \bm O_{p\times p})$ denote the $(m^0+1)p \times (m^0+1)p$ matrix with some diagonal $p\times p$ block equal to $\bm A$, and the rest of the elements including the diagonal $p\times p$ null matrices denoted by $\bm O_{p\times p}$ equal to zero (from the context below, the position of this block is clear). Then for $t \in I_j^0$, 
$
\widetilde \vx_{it}' = [\bm 0_{1\times p}, \ldots,  \bm 0_{1\times p}, \vx_{it,j}', \bm 0_{1\times p} \ldots, \bm 0_{1\times p}'- [\vw_{i1}', \ldots, \vw_{i,m^0+1}']' \equiv \vx_{it,j}' - \vw_i', 
$
with $\bm 0_{1\times p}$ the $1\times p$ null vector, so we have:
\begin{align*}
&N^{-1} \sum_{i=1}^N \sum_{t=1}^T \widetilde{\vx}_{it} \widetilde{\vx}_{it}'  =  \sum_{j=1}^{m^0+1} \sum_{t\in I_j^0} N^{-1} \sum_{i=1}^N (\vx_{it,j} - \vw_i)( \vx_{it,j}-\vw_i)'\\
& = \sum_{j=1}^{m^0+1} \sum_{t\in I_j^0} N^{-1} \sum_{i=1}^N \vx_{it,j}\vx_{it,j}' - \sum_{j=1}^{m^0+1} \sum_{t\in I_j^0} N^{-1} \sum_{i=1}^N \vx_{it,j} \vw_i' \\
&- \sum_{j=1}^{m^0+1} \sum_{t\in I_j^0} N^{-1} \sum_{i=1}^N \vw_i \vx_{it,j}' + \sum_{j=1}^{m^0+1} \sum_{t\in I_j^0} N^{-1} \sum_{i=1}^N \vw_i \vw_i'.
\end{align*}
By A\ref{a1}(iv), $\sum_{t\in I_j^0} N^{-1} \sum_{i=1}^N \vx_{it,j}\vx_{it,j}' \inp \diag(0,\ldots, \Delta T_j^0 \vQ_j^0, \ldots 0)$, so $\sum_{j=1}^{m^0+1} \sum_{t\in I_j^0} N^{-1} \sum_{i=1}^N \vx_{it,j}\vx_{it,j}' \inp \diag_{j=1:m^0+1}(\Delta T_j^0 \vQ_j^0)$, and $ \sum_{j=1}^{m^0+1} \sum_{t\in I_j^0} N^{-1} \sum_{i=1}^N \vx_{it,j} \vw_i' =  - T^{-1}\vc_{j=1:m^0+1}\left(N^{-1} \sum_{i=1}^N \sum_{t\in I_j^0} \vx_{it}\right)$\\$\times \vc_{j=1:m^0+1}\left(N^{-1} \sum_{i=1}^N \sum_{t\in I_j^0} \vx_{it}'\right) = TN^{-1} \sum_{i=1}^N \vw_i \vw_i'$
$\inp - T^{-1}\sum_{j,k=1}^{m^0+1} \vQ_{jk}^0.$ Therefore, \\
$
N^{-1} \sum_{i=1}^N \sum_{t=1}^T \widetilde{\vx}_{it} \widetilde{\vx}_{it}' \inp  \diag_{j=1:m^0+1}(\Delta T_j^0 \vQ_j^0) - (2T^{-1} -T^{-2}) \widetilde \vQ^0,$
the $p(m^0+1) \times p(m^0+1)$ matrix with the $(j,k)$ sub-matrix of size $p\times p$ equal to $\vQ_{jk}^0$. Because $N^{-1} \sum_{i=1}^N \sum_{t=1}^T \widetilde{\vx}_{it} \epsilon_{it}^* \ind \mathcal N(0, \vW_{2})$ by A\ref{a2}(iii), it follows that
$\sqrt{N} (\hat{\vbeta}_{FFE} - \vbeta^0) \ind \mathcal N(0, \vOmega_2^{-1}\vW_{2}\vOmega_2^{-1})$.  

\end{proof}

\begin{proof}[Proof of Theorem \ref{theo3}]\hfill \\
Since $\vOmega_{ts}^0=0$ for $t\neq s$, $\vQ_{jj} = \sum_{t,s \in I_j^0}\vOmega_{ts}^0 = \Delta T_j^0 \vQ_j^0$. Therefore, $\vOmega_{1} = \diag_{j=1:m^0+1} (\Delta T_j^0 \vQ_j^0 - \vQ_j^0)$. Moreover, $\vQ_{jk} = \sum_{t\in I_j^0} \sum_{s \in I_k^0} \vOmega_{ts} = 0$ for $j\neq k$. For $j=k$, $\vQ_{jk} = \sum_{t,s\in I_j^0} \vOmega_{ts} = \Delta T_j^0 Q_j^0$. Therefore $\vOmega_2 = \diag_{j=1:m^0+1} (\Delta T_j^0\vQ_j^0 - (2T^{-1} -T^{-2})\Delta T_j^0 \vQ_j^0) = \diag_{j=1:m^0+1} [\Delta T_j^0 (1-T^{-1})^2\vQ_j^0] = \frac{(T-1)^2}{T^2}\diag_{j=1:m^0+1} (\Delta T_j^0 \vQ_j^0)$. Next,  recall that $$\bm W_{1} = \lim Var \{ \vc_{j=1:m^0+1}[N^{-1/2} \sum_{i=1}^N \sum_{t\in I_j^0} (\vx_{it} - \overline \vx_{ij}) \epsilon_{it}] \}.$$
Because $\vOmega_{ts}^0=0$, and $E(\epsilon_{it} \epsilon_{is}|\vX_i)=0$,
\begin{align*}
& \lim Var [N^{-1/2} \sum_{i=1}^N \sum_{t\in I_j^0} (\vx_{it} - \overline \vx_{ij}) \epsilon_{it}] = E[ N^{-1}  \sum_{i=1}^N \sum_{t,s\in I_j^0} (\vx_{it} - \overline \vx_{ij})(\vx_{it} - \overline \vx_{ij})' \epsilon_{it}^2] \\
& = \sigma^2 \lim [ N^{-1}  \sum_{i=1}^N \sum_{t\in I_j^0} E(\vx_{it} - \overline \vx_{ij})(\vx_{it} - \overline \vx_{ij})' ] = \sigma^2 (\Delta T_j^0 \vQ_j^0 - \vQ_j^0);\\
& \lim E [N^{-1/2} \sum_{i=1}^N \sum_{t\in I_j^0} (\vx_{it} - \overline \vx_{ij}) \epsilon_{it}] [N^{-1/2} \sum_{i=1}^N \sum_{s\in I_k^0} (\vx_{is} - \overline \vx_{ik}) \epsilon_{is}]' =0.
\end{align*} 
Therefore, $\bm W_{1} = \sigma^2 \vOmega_1$, so $\bm V_{FE} = \sigma^2 \vOmega_1^{-1}$. 
Now we calculate $\bm W_{2}$. 
\begin{align*}
\bm W_{2}& = \lim Var [N^{-1/2} \sum_{i=1}^N \sum_{t=1}^T \widetilde \vx_{it}(\epsilon_{it} -\overline \epsilon_i)] = \lim E[N^{-1} \sum_{i=1}^N \sum_{t,s=1}^T \widetilde \vx_{it} \widetilde \vx_{is}' (\epsilon_{it} -\overline \epsilon_i)(\epsilon_{is} -\overline \epsilon_i)] \\
& = \lim N^{-1} \sum_{i=1}^N \sum_{t,s=1}^T E(\widetilde \vx_{it} \widetilde \vx_{is}') E[(\epsilon_{it} -\overline \epsilon_i)(\epsilon_{is} -\overline \epsilon_i)|\vX_i].
\end{align*}
Now, $E[(\epsilon_{it} -\overline \epsilon_i)(\epsilon_{is} -\overline \epsilon_i)|\vX_i] = \sigma^2(1-T^{-1})$ if $t=s$, and $-\sigma^2T^{-1}$ otherwise. Therefore, 
\begin{align*}
\bm W_{2}& =\sigma^2 (1-T^{-1}) \lim N^{-1} \sum_{i=1}^N \sum_{t=1}^T E(\widetilde \vx_{it} \widetilde \vx_{it}') - T^{-1}\sigma^2\lim N^{-1} \sum_{i=1}^N \sum_{t,s=1,s\neq t}^T E(\widetilde \vx_{it} \widetilde \vx_{is}').
\end{align*}
For $t\in I_j^0, s\in I_k^0$ and $t\neq s$, we have:
$
E(\widetilde \vx_{it} \widetilde \vx_{is}')  = E[(\vx_{it,j} - \vw_i)( \vx_{is,k}-\vw_i)'] =  - E(\vx_{it,j} \vw_i') - E[ \vw_i(\vx_{is,k})'] + E(\vw_i \vw_i')$. Therefore,
\begin{align*}
\bm W_{2}& = \sigma^2 \lim N^{-1} \sum_{i=1}^N \left [\sum_{t=1}^T E(\widetilde \vx_{it} \widetilde \vx_{it}')- T^{-1}\sigma^2 \sum_{j,k=1}^{m^0+1} \sum_{t\in I_j^0, s\in I_k^0} E(\vx_{it,j} \vw_i') \right.\\
&\left.- T^{-1}\sigma^2 \sum_{j,k=1}^{m^0+1} \sum_{t\in I_j^0, s\in I_k^0} E(\vw_i \vx_{it,k}') +  T^{-1}\sigma^2 \sum_{t,s=1}^T  E(\vw_i \vw_i') \right]\\
& = \sigma^2 \vOmega_2 - 2 \sigma^2 \sum_{k=1}^{m^0+1} \sum_{s\in I_k^0} E(\vw_i \vw_i') + \sigma^2 T (E\vw_i \vw_i') = \sigma^2 \vOmega_2 - \sigma^2 T E(\vw_i \vw_i') \\
&= \sigma^2 \vOmega_2 -\sigma^2 T^{-1} \diag_{j=1:m^0+1} (\Delta T_j^0 \vQ_j^0) \\
& = \diag_{j=1:m^0+1}\{ \sigma^2 \vQ_j^0 [\Delta T_j^0(1- T^{-1}(2-T^{-1}) -T^{-1}]\} = \frac{T^2-3T+1}{T^2} \diag_{j=1:m^0+1}[ \sigma^2 \vQ_j^0 \Delta T_j^0].
\end{align*}
Therefore, 
$
\bm V_{FFE} = \diag_{j=1:m^0+1}\left[\sigma^2 (\vQ_j^0)^{-1}\frac{(T^2-3T+1) T^2}{(T-1)^4} \frac{1}{\Delta T_j^0}\right] .
$
We now compare it to \\ $\bm V_{FE} = \diag_{j=1:m^0+1}\left[\sigma^2 \frac{1}{\Delta T_j^0-1}(\vQ_j^0)^{-1}\right]$. Since $T\geq 4$, $T^3-5T^2+4T-1 = T(T-1)(T-4) -1 >0$, \begin{align*}
&\frac{(T^2-3T+1) T^2}{(T-1)^4} \frac{1}{\Delta T_j^0} - \frac{1}{\Delta T_j^0-1} = \frac{(\Delta T_j^0 -1)(T^4 - 3T^3 + T^2) - (T-1)^4 \Delta T_j^0}{\Delta T_j^0 (\Delta T_j^0-1)(T-1)^4 } = \frac{\Delta T_j^0 [(T^4 - 3T^3 + T^2)-(T-1)^4] - (T^4 - 3T^3 + T^2)  }{(\Delta T_j^0-1)(T-1)^4} \\
&= \frac{\Delta T_j^0  [T^3-5T^2+4T-1] - (T^4 - 3T^3 + T^2)  }{(\Delta T_j^0-1)(T-1)^4} < \frac{T  [T^3-5T^2+4T-1] - (T^4 - 3T^3 + T^2)  }{(\Delta T_j^0-1)(T-1)^4} =\frac{-T(2T-1)(T-1)}{(\Delta T_j^0-1)(T-1)^4} <0,
\end{align*}
it follows that each of the FFE estimators is strictly more efficient than its FE counterpart, and that $V_{FFE}- V_{FE}$ is negative definite.
\end{proof}

\newpage
\section*{Appendix B}

\begin{table}[h!]                                                                           
        \scriptsize                                                              
        \begin{tabular}{|p{3.5cm}|p{7.5cm}|p{4cm}|}   
                \hline 
                Variable                                                        & Description & Source \\ \hline 
                \vspace{0.2 cm} 1-year house price expectations         & \vspace{0.2 cm}Home owners in the ALP answered the following question: On a scale from 0 percent to 100 percent where 0 means that you think there is a chance and 100 means that you think the event is absolutely sure to happen, what do you think are the chances that by next year at this time your home will be worth more than it is today? \vspace{0.2 cm} & \vspace{0.2 cm}Rand American Life Panel (ALP) \\   
                
                Change in unemployment  & Quartely percentage change in the unemployment rate (on state-level); Changes in unemployment rates are based on data of the past three months. \vspace{0.2 cm} & The state-level unemployment rates are taken from the Bureau of Labor Statistics (http://www.bls.gov). \vspace{0.2 cm} \\    
                
                Change in house prices          & Quartely percentage change in house prices (on state-level); Changes in house prices are based on data of the past two quarters. \vspace{0.2 cm} & The state-level house price index is taken from the Office of Federal Housing Enterprise Oversight (OFHEO). http://www.fhfa.gov \vspace{0.2 cm} \\
                
                Home value                                              & Self-reported house value & ALP \vspace{0.2 cm} \\                    
                
                Income per capita                               & Annual family income divided by the number of household members (see Niu and van Soest 2014 for a detailed description of this variable) \vspace{0.2 cm} & ALP \\               
                
                Household size                                  & Number of household members \vspace{0.2 cm} & ALP \\                       
                
                Married                                                                 & Dummy variable that is equal to one if the home owner is married \vspace{0.2 cm} & ALP \\                          
                
                Health                                                          & Self-reported health status measured on a 1 to 5 scale \vspace{0.2 cm} & ALP  \\                           
                
                Non-economic sentiment  & Survey questions from the ALP on the respondent's life-satisfaction, happiness and wornout are combined into this single measure ranging between 0 and 1 (0 reflects strong dissatisfaction and 1 strong satisfaction). See Niu and van Soest (2014) for a detailed description of this variable. \vspace{0.2 cm} & ALP \\                  
                
                Economic sentiment                      & Survey questions from the ALP on the respondent's satisfaction with their job, total household income, economic and financial situation are combined into this single measure ranging between 0 and 1 (0 reflects strong dissatisfaction and 1 strong satisfaction). See Niu and van Soest (2014) for a detailed description of this variable. \vspace{0.2 cm} & ALP \\                 
                
                Age                                                                     & Age of the respondent in years \vspace{0.2 cm} & ALP \\                                                             
                
                Sand state                                              & Dummy variable that is equal to one if the home owner lives in Arizona, California, Florida or Nevada, the four so-called sand states (these states were most hurt in the real estate collapse) \vspace{0.2 cm} &  \\     
                
                Female                                                          & Dummy variable that is equal to one if the home owner is female \vspace{0.2 cm} & ALP \\                                                        
                
                White                                                           & Dummy variable that is equal to one if the home owner is caucasian \vspace{0.2 cm} & ALP \\                                                            
                
                Schooling                                                       & Dummy variable that is equal to one if the home owner has a Bachelor or higher degree \vspace{0.2 cm} & ALP \\  \hline   
                
        \end{tabular}                        
        \caption{Description of variables used in Section 5.2.}
        \label{tab:datadic}                                                        
\end{table}  

\end{document}